\newcommand{\dk}[1]{{#1}}
\newcommand{\remove}[1]{}
\newcommand{\qedsymb}{\hfill{\rule{2mm}{2mm}}}
\newcommand{\BTD}{BTD}
\newcommand{\SaSmod}{BTD\_Construct}
\newcommand{\eps}{\varepsilon}
\newcommand{\m}{\mathcal}
\newcommand{\cT}{{\mathcal T}}
\newcommand{\cN}{{\mathcal N}}
\newcommand{\INT}{{\mathbb Z}}
\newcommand{\NAT}{{\mathbb N}}
\newcommand{\dist}{\text{dist}}
\newtheorem{theorem}{Theorem}
\newtheorem{lemma}{Lemma}
\newtheorem{proposition}{Proposition}
\newtheorem{corollary}{Corollary}
\begin{document}

\title{Multi-Broadcasting under the SINR Model}

\author[1]{Sai~Praneeth~Reddy}
\author[2]{Dariusz R.~Kowalski}
\author[3]{Shailesh~Vaya}
\affil[1]{Department of Computer Science\\
Indian Institute of Technology Delhi, India\\
\textit{saipraneet@gmail.com}}
\affil[2]{Department of Computer Science
\\University of Liverpool, UK
\textit{darek@liverpool.ac.uk}}
\affil[3]{Xerox Research Centre India\\
Bangalore, India\\
\textit{shailesh.vaya@xerox.com}}






\IEEEcompsoctitleabstractindextext{%
\begin{abstract}
  We study the multi-broadcast problem in multi-hop wireless networks under the SINR model 
deployed in the 2D Euclidean plane. In multi-broadcast, there are $k$ initial rumours, potentially belonging to different nodes, that must be forwarded to all $n$ nodes of the network. 
\remove{
Furthermore, in each round a node can only transmit a small message that could contain at most one initial rumor and $O(\log n)$ control bits. In order to be successfully delivered to a node, transmissions must satisfy the (Signal-to-Inference-and-Noise-Ratio) SINR condition and have sufficiently strong signal at the receiver. 
}
We present deterministic algorithms for multi-broadcast for different settings that reflect the different types of knowledge about the topology of the network
available to the nodes: (i) the whole network topology (ii) their own coordinates and coordinates of their neighbors (iii) only their own coordinates, and (iv) only their own ids and the ids of their neighbors. For the former two settings, we present solutions that are scalable with respect to the diameter of the network and 
the polylogarithm of the network size, i.e., $\log^c n$ for some constant $c> 0$, while the solutions for the latter two have round complexity that is superlinear in the number of nodes. The last result is of special significance, as it is the first result for the SINR model that does not require nodes to know their coordinates in the plane (a very specialized type of knowledge), but intricately exploits the understanding that nodes are implanted in the 2D Euclidean plane.
\end{abstract}




\begin{keywords}
  Wireless communication, SINR model, 
  Distributed algorithms, Centralized algorithms, Deterministic Algorithms
\end{keywords}
}

\maketitle



\section{Introduction}
\label{intro}

\remove{
\textbf{Things to do}
\begin{enumerate}
\item{} Cite works which do not use knowledge of coordinates: wagner, chinese, kuhn and discuss
\item{} Discuss weak devices and their physical motivations and their meaningfullness. elaborately cite the weak devices assumption in other works.
\end{enumerate}
}

  We consider the Signal-to-Inference-and-Noise-Ratio (SINR) model for communication in ad-hoc wireless networks. The wireless network consists of $n$ {\em stations}, also called {\em nodes}. 
Each node is assigned a unique ID in the range $\{1,\dots,N\}$, which is also called its label. Furthermore, all nodes are assumed to lie in a 2-dimensional space with Euclidean metric and have uniform transmission powers. The communication graph of the network is a graph defined on the nodes; an edge $(u,w)$ exists in communication graph if node $w$ can successfully receive the message transmitted by $u$ when no other node is simultaneously transmitting. The diameter of the communication graph is referred to by $D$ and the maximum degree by $\Delta$. 

  In the {\em multi-broadcast} problem, there is a set $K$ of {\em source nodes} that are active in the beginning 
of the protocol, each of them with unique packet (also called rumor).
The goal is to deliver all rumors stored in source nodes to all nodes in the network.  
In this work we study the multi-broadcast problem in SINR model with unit-size restriction on message size.
The {\em unit-size} restriction on the size of each message sent says that each message must contain at most one original rumor and $O(\lg n)$ additional control bits. 
Furthermore, we assume the {\em non-spontaneous} setting, in which all nodes except the nodes in set $K$ are asleep at the beginning. The asleep nodes cannot transmit a message till they receive a message from some neighboring node. We consider {\em round complexity} as the sole measure for comparing efficiency of the distributed protocols developed in this work.

\subsection{Our results}
\label{ourresults}


In this work, we present a fairly comprehensive and rigorous study of the multi-broadcast problem in the context of SINR model 
in several settings. 
\begin{enumerate}
\setlength{\leftskip}{-0.3cm}
\item{} For the centralized setting, 
\dk{where nodes have full knowledge about the topology of the network,}
we present a deterministic algorithm that runs in $O(D + k \lg\Delta)$ rounds. We also present an algorithm sensitive to the granularity of the network $g$, where granularity is defined as the maximum transmission range times the inverse of the minimum distance between any two stations. It accomplishes 
multi-broadcast in $O(D + k + \lg g)$ rounds.
\item{} For the setting in which nodes know only their own coordinates and the coordinates of their neighbors, we present a deterministic algorithm with round complexity $O(D \lg^2 n + k \lg \Delta)$.
\item{} For the setting in which the nodes are given only their own coordinates we present a deterministic algorithm with a round complexity of $O((n+k) \lg n)$.

\item{} Our most interesting result is for the setting when nodes know only their own labels and the labels of their neighbors, besides the standard knowledge of parameters $n,N,$ $k,D,\Delta$.
No deterministic results have been known for this setting in the literature. 
It seems somewhat hard to fathom that one could develop a fast algorithm for this setting considering that nothing is known about the underlying geometric positions of the node.

  Single source depth first search on the network is easy to conduct on the network using the neighborhood information. However, the problem becomes challenging when multiple sources concurrently start such a search because of their transmissions can interfere with each other.

  We present a deterministic protocol for this setting, which exploits the 
fact
  that the nodes are embedded in the 2-Dimensional Euclidean plane without explicitly utilizing the actual  coordinates of the nodes. Our algorithm 
runs in time $O((n+k) \lg n)$, and it
stitches together a few ideas developed for \textit{Breadth-Then-Depth} search trees and their efficient distributed construction under the SINR model, as well as efficient token elimination.
\end{enumerate}

\section{Model}
\label{prelim}
  We consider a wireless network which consists of $n$ {\em nodes} deployed in a two-dimensional Euclidean plane. The {\em Euclidean metric} on the plane in which the nodes are embedded is denoted $\dist(\cdot,\cdot)$. The {\em transmission power} of station $v$ is noted by $P_v$ and is a positive real number.

  The Signal-To-Noise-Interference-Ratio Model, aka SINR model, is characterized by three parameters: path loss $\alpha> 2$, ambient noise $\cN > 0$, and threshold $\beta\ge 1$.%
\footnote{For simplicity, in the analysis we assume $\beta=1$; this can be easily scaled up to any $\beta\ge 1$.}

  For a set of stations, transmitting in the same round, $\cT$ the success of transmission from a transmitting node $v$ to a receiving node $u$ depends on the transmission power of $u$ reaching node $v$ and the inference due other concurrent transmissions in the neighborhood. In particular, this signal strength to noise ratio referred to as $SINR(v,u,\cT)$ is defined as follows:

\begin{equation}\label{e:sinr}
SINR(v,u,\cT)
=
\frac{P_v\dist(v,u)^{-\alpha}}{\cN+\sum_{w\in\cT\setminus\{v\}}P_w\dist(w,u)^{-\alpha}}
\end{equation}

A station $u$ successfully receives a message from a station $v$ in a round if it is true that 
$v \in \cT$, $u \notin \cT$ and the following conditions hold true:

\begin{itemize}
\item[a)]
$P_v\dist^{-\alpha}(v,u)
\geq
(1+\eps)\beta\cN$

\item[b)]
$SINR(v,u,\cT)
\ge
\beta$ ,
\end{itemize}
where $\cT$ denotes the set of stations which are transmitting concurrently and $\eps>0$ is a fixed {\em signal sensitivity parameter} of the model.%


\remove{ 
As the first of the above
conditions is a standard formula defining SINR model in the literature, the second condition
is less obvious. Informally, it states that reception of a message at a station $v$ is possible
only if the power received by $u$ is at least $(1+\eps)$ times larger than the minimum power
needed to deal with ambient noise. This assumption is quite common in the literature
(c.f.,\ \cite{KV10}), for two reasons.
First, it captures the case when the ambient noise, which in practice is of random nature,
may vary by factor $\eps$ from its mean value $\cN$ (which holds with some meaningful
probability).
Second, the lack of this assumption trivializes many communication tasks; for example,
in case of the broadcasting problem, the lack of this assumption implies
a trivial lower bound $\Omega(n)$ on time complexity, even for shallow network
topologies of eccentricity
$O(\sqrt{n})$ (i.e., of $O(\sqrt{n})$ hops) and for centralized and randomized algorithms.\footnote{%
Indeed, assume that we have a network whose all vertices
form a grid
$\sqrt{n}\times \sqrt{n}$ such that $P_v=1$ for each station $v$ and
distances between consecutive elements of the grid
are $(\beta\cdot\cN)^{-1/\alpha}$; that is, the power of the signal received by each
station is at most equal to the ambient noise.
If the constraint
$P_v\dist^{-\alpha}(v,u)\geq (1+\eps)\beta\cN$ is not required for reception
of the message, the source message can still be sent to each station of the network. However,
if more than one station is sending a message simultaneously, no station in the
network receives a message.
}
} 

\remove{gs
In the paper, we assume for the sake of
clarity of presentation that $\beta=1$ and
$\cN=1$.
These assumptions can be dropped without harming the asymptotic performances of
the presented algorithms and lower bounds formulas.
}

\paragraph{Ranges and uniformity}
The {\em communication range} $r_v$ of a station $v$ is defined as the radius of the
ball in which a message transmitted
by the station is received, given that no other station transmits at the same time.
In this paper, only uniform networks are considered with $r_v=r$.
\remove{gs
For clarity of presentation
we
assume
that all powers are equal to $1$, i.e., $P_v=1$ for each $v$.
This assumption
can be dropped without changing asymptotic formulas.
Under these assumptions, $r_v=r$
for each station $v$.
}
The {\em range area} of a station $v$ is defined to be the ball of radius $r$ which is centered at $v$.

\paragraph{Communication graph and graph notation}
  The {\em communication graph} $G(V,E)$ of a given network consists of all network nodes and edges $(v,u)$ such that $u$ is in the range of $v$. The communication graph is also called the {\em reachability graph}. For uniform networks, the communication graph is symmetric.
  The {\em neighborhood} of a node $u$ is defined as the set of all 
neighbors of $u$ in $G$, i.e. the set $\{w\,|\, (w,u)\in E(G)\}$.

  The {\em graph distance} from a node $v$ to node $w$ is equal to the length of a shortest path from $v$ to $w$ in the communication graph, where the length of a path is equal to the number of edges contained in it.

  $\Delta$ is used to denote the maximum degree of a node in the communication graph.

\remove{gs-potem
Assume now that many stations transmit in parallel.
We say that a station $v$ transmits {\em $c$-successfully} in a round
$t$ if $v$ transmits a message in round $t$ and this message is heard by
each station $u$ in the Euclidean distance
at most
$c$ from $v$. We say that a station
$v$ transmits {\em successfully} in round $t$ if each of its neighbors in the communication graph can hear its message.
Finally, $v$ transmits {\em successfully} to $u$ in round $t$ if $v$ transmits
a message and $u$ receives this message in round $t$.
}

\paragraph{Synchronization}
  It is assumed that the protocols work synchronously in rounds. Each station can act
either as a sender or as a receiver in a given round. We do not assume ticking of global clock. The reader may note that it is easy to guarantee the same clock in all nodes by propagating its current reading (or round counter) piggybacked to the transmitted messages (it adds only $O(\lg n)$
additional bit to the message).

\paragraph{Carrier sensing}
  We consider the model {\em without carrier sensing}. That is,
a station $u$ has no other feedback from the wireless channel 
than
receiving or not successfully receiving a message in a round $t$.

\paragraph{Knowledge of stations}
Each station is assigned a unique ID from the set $[N]$,%
\footnote{We denote $[i]$ to refer to the set $\{1,2,\ldots,i\}$ and $[i,j]$ to the set $\{i,i+1,\ldots,j\}$ for $i,j\in\mathbb N$.}
where $N$ is a polynomial in $n$.

  Stations may know their locations and locations of other nodes or their neighborhood besides parameters $n$, $N$. This exact specification of the setting is clearly stated for every setting we study.
  Some subroutines use the parameter granularity $g$, which is defined as $r$ times the
inverse of the minimum distance between any two stations (c.f.,~\cite{EmekGKPPS09}).


\paragraph{Multi-broadcast problem and complexity parameters}
  In the broadcast problem, 
there is one distinguished node, called the {\em source}, which initially holds a piece of information (also called a source message or a broadcast message or a rumor). The goal is to deliver this message to all other nodes in the network.

In the multi-broadcast problem, a set $K$ of stations hold $k$ rumors in total,
which are to be disseminated to the rest of the network. 
We do not assume that $|K|=k$, therefore $k$ could be seen as an upper bound on $|K|$.

The {\em round complexity} denotes the number of communication rounds for which a protocol is executed  before accomplishing the task (multi-broadcast) in the worst case.

\paragraph{Messages and initialization of stations other than source}

We assume that a single message sent in the execution of any algorithm
can carry a single rumor and a number of control bits, which is upper bounded by some 
$O(\lg n)$. This model is called in the literature a {\em unit-size message} model.
We consider the {\em non-spontaneous wake-up} setting, in which 
only the initial subset $K$ of rumor sources are awake and other nodes have to receive a message 
in order to start their participation in the protocol (prior to this they are idle and only listening to the wireless medium).

\subsection{Previous and Related Results}

  In \cite{JK12}, the authors consider the model of (uniform power) weak devices and designed distributed deterministic algorithms for building a backbone structure in $O(\Delta\mbox{ polylog } n)$ rounds.
	Unlike in our setting, in~\cite{JK12} it was assumed that all nodes simultaneously start building the backbone (so called {\em spontaneous wake-up} setting). In another recent work, \cite{JKS-ICALP-13}, a non-spontaneous wake-up was assumed, as in our paper. It studied deterministic {\em single} broadcast, developed several algorithms (amongst others $O(n\log n)$ algorithm with knowledge of only own coordinates and $O(D\log^2 n)$ if nodes know also coordinates of their neighbors), and proved lower bounds separating models with and without local knowledge as well as implying that there is an extra cost payed due to lack of synchronization (when comparing to~\cite{JK12}).
  Both these papers assumed knowledge of coordinates. Our study, in turn, goes further in two aspects: first, we study more general problem of multi-broadcast, and second, we analyze the impact of knowledge of coordinates on algorithm's performance, which brings a new perspective of wireless devices not equipped with GPS.

  Deterministic broadcasting with {\em strong devices} (i.e., not restricted by the fact that the signal must be sufficiently strong in order to be noticed) can be done in $O(D \lg^2 n)$ with the knowledge of coordinates. This was established in~\cite{JKS-FCT-13}. From these results, it is inferred that there is a complexity gap between the two models (i.e., weak and strong devices) for broadcast problem. Slightly faster randomised solutions were developed in~\cite{JKRS-DISC-13}, and other in slightly different models~\cite{DGKN13, YuHWYL13} (in the latter, the setting without knowledge of coordinates was considered and the complexity raised by the polylogarithm of the granularity).

\remove{
Deterministic data aggregation and a few other problems have also been studied for SINR-based models in ad hoc setting, c.f.,~\cite{HobbsWHYL12}. 
{\em Local} broadcasting, in which the requirement is to inform only their neighbors in the corresponding reachability graph, was studied in~\cite{YuWHL11}. The setting allowed algorithm to control power, so that stations could transmit with any power smaller than the maximal one in order to avoid collisions.
Randomized solutions for contention resolution and local broadcasting were given in~\cite{GoussevskaiaMW08} and~\cite{KV10} respectively.
Multiple Access Channel properties were also recently studied under the SINR model, c.f.,~\cite{RichaSSZ}.
}

 There is a vast amount of work on centralized algorithms under the SINR model, for which the  most studied problems include connectivity, capacity maximization, link scheduling etc.; The reader is directed to the survey~\cite{WatSurv} for recent results. 

\remove{
\item
z prac o connectivity w SINR, podaje cos co traktuje o uniform power:
\cite{AvinLPP09} (stala liczba kolorow, ale stacje tylko w wezlach gridu);
\cite{AvinLP09} (o tym, ze uniform niewiele gorsze od nonuniform);
\item
w surveyu Wattenhoffera i in. jest cala kolekcja wynikow na temat one-slot scheduling
i multi-slot scheduling offline (\textbf{scentralizowany}) dla modelu uniform: NP-zupelnosc, algorytmy
aproksymacyjne... a z algorytmow rozproszonych wymieniaja glownie:
\cite{GoussevskaiaMW08} o local broadcasting zrandomizowanym (``each node performs a successful local
broadcasting in time proportional to the number of neighbors
in its physical proximity''); \cite{LebharL09} traktuje o uniform (udg): nie doczytalem dokladnie,
ale chodzi o zrandomizowana symulacje collision-free (?) UDG w modelu SINR przy jednostajnym rozkladzie
wierzcholkow w ustalonym kwadracie...
\end{itemize}
}

\paragraph{Radio network model}
  In the model of radio networks, a transmitted message is successfully received if there are no other simultaneous transmissions from the neighbors of the receiver in the reachability graph.
The model does not take into account the real strength of the received signals and the signals from outside of the close proximity, however some techniques related to restricting {\em local} interference may be similar.
In the geometric ad hoc setting, Dessmark and Pelc~\cite{DessmarkP07} were the first who studied this problem. They analyzed the impact of local knowledge, which is defined as the range within which stations can discover the nearby stations.
%
Emek et al.~\cite{EmekGKPPS09} presented a broadcast algorithm working in time $O(Dg)$ in Unit Disc Graphs (UDG) radio networks with eccentricity $D$ and granularity $g$. In Emek et al.~\cite{EmekKP08} proved a matching lower bound $\Omega(Dg)$.
%
%
%
\remove{
  There are several papers that study deterministic broadcasting in the radio model of wireless networks, under which a message is successfully heard if there are no other simultaneous transmissions from the {\em neighbors} of the receiver in the communication graph.
  This model does not take into account the real strength of the received signals and also the signals
from outside of some close proximity.
In the geometric ad hoc setting, Dessmark and Pelc~\cite{DessmarkP07} were the first who studied
this problem. They analyzed the impact of local knowledge, defined as a range within which
stations can discover the nearby stations.
%
Later, there were several works analyzing deterministic broadcasting in geometric graphs in the centralized
and distributed radio setting,
c.f.,~\cite{EmekGKPPS09,EmekKP08,GasieniecKKPS08,GasieniecKLW08,SenH96}.
%
}
%

In the {\em graph-based} model of radio networks, stations may not be explicitly deployed in a metric
space. The fastest $O(n\log(n/D))$-round algorithm was developed by Kowalski~\cite{Kow-PODC-05} and almost a matching lower
bound was given by Kowalski and Pelc~\cite{KP-DC-05}, who also studied fast randomized
solutions (in parallel with~\cite{CzumajRytter-FOCS-03}).
The above results hold without the assumption of local knowledge.
When local knowledge is assumed, Jurdzinski and Kowalski~\cite{JK-OPODIS-12} showed
a lower bound $\Omega(\sqrt{Dn\log n})$ on the number of rounds and an algorithm
of relatively close round complexity $O(D\sqrt{n}\log^6 n)$.
Multi-broadcast with unit size messages has also been studied intensively for the ad-hoc radio networks model, c.f., \cite{CKPR-ICALP-11}.

\subsection{Technical Preliminaries}

For the considered non-spontaneous wake-up setting, observe that a round counter could be easily maintained by already informed nodes by passing it along the network with the transmitted messages.
In this sense, all algorithms can be assumed to have a global clock. 
Note also that for $K$ being the set of all nodes, the obtained setting is the spontaneous wake-up one.

In the multi-broadcast protocols, we explicitly specify the details of the message that is transmitted by a node.

A station $v$ transmits {\em successfully} (or to station $u$) in round $t$ if each of its neighbors (station $u$) in the communication graph can hear its message.

\paragraph{Grids}
The notations for grids are taken from \cite{JKS-ICALP-13}. For a given a parameter $c>0$, we define 
a partition of the $2$-dimensional space into square boxes of size $c\times c$ by the grid $G_c$, in such a way that: all boxes are aligned with the coordinate axes, point $(0,0)$ is a grid point, each box includes its left side without the top endpoint and its bottom side without the right endpoint and does not include its right and top sides.
  $(i,j)$ is the coordinate of the box with its bottom left corner located at $(c\cdot i, c\cdot j)$, for $i,j\in \INT$. A box with coordinates $(i,j) \in \INT^2$ is denoted as $C(i,j)$.

As found in \cite{DessmarkP07,EmekGKPPS09}, the {\em grid} $G_{r/\sqrt{2}}$ is very useful in the design of the algorithms for UDG (unit disk graph) radio networks, where
$r$ is equal to the range of each station.
This is because $r/\sqrt{2}$ is the largest parameter of a grid such that each
station in a box is in the range of every other station in that box.

Fix $\gamma=r/\sqrt{2}$, where $r=(1+\eps)^{-1/\alpha}$ is the transmission range, and call $G_{\gamma}$ the {\em pivotal grid}. If not stated otherwise, we shall be referring to (boxes~of)~$G_{\gamma}$.

\paragraph{Schedules}
  A (general) {\em broadcast schedule} $\mathcal{S}$ of length $T$ wrt $N\in\NAT$ is a mapping from the set of plausible labels $[N]$ to binary sequences of length $T$. A station with identifier $v \in [N]$ {\em follows} the schedule $\m{S}$ of length $T$ if $v$ transmits a message in round $t$ of that period iff the position $t\mod T$ of $\m{S}(v)$ is equal to $1$.

  A geometric broadcast schedule $S$ of length $T$ with parameters $N, \delta \in N$, $(N, \delta)$-gbs for short, is a mapping from $[N] * [0, \delta-1]^2$ to binary sequences of length $T$. $v$ follows $(N, \delta)$-gbs $S$ for the grid $G_c$  in a fixed period of time, when $v$ transmits a message in round $t$ of that period iff $t^{th}$ position of $S(v, i \mod \delta, j \mod \delta)$ is equal to $1$. A set of stations $A$ on the plane is {\em $\delta$-diluted} wrt grid $G_c$, for $\delta\in\NAT\setminus\{0\}$, if for any two stations $v_1, v_2\in A$ with grid coordinates $(i_1,j_1)$ and $(i_2,j_2)$, respectively, it holds true that $(|i_1-i_2|\mod \delta)=0$ and $(|j_1-j_2|\mod \delta)=0$.

  Let $S$ be a general broadcast schedule wrt $N$ of length $T$, let $c, \delta > 0, \delta \in N$. A $\delta$-dilution of $S$ is defined as $(N,\delta)$-gbs $S'$ such that the bit $(t-1)\delta^2 + a \delta + b$ of $S'(v,a,b)$ is equal to $1$ iff the bit $t$ of $S(v)$ is equal to $1$.
	
	As also observed in \cite{JKS-ICALP-13}, any station in a box $C(i,j)$ of the pivotal grid can have communicable neighbors in $20$ boxes. These boxes are called neighboring boxes of box $C(i,j)$. Following \cite{JKS-ICALP-13}, we define the set $DIR \subset [-2,2]^2$ such that $(d_1,d_2) \in DIR$ iff it is possible that boxes with coordinates $(i,j)$ and $(i+d_1,j+d_2)$ can be neighbors. Contrarily, given $(i,j)$ and $(d_1,d_2) \in DIR$, we say box $C(i+d_1,j+d_2)$ is located in direction $(d_1,d_2)$ from box $C(i,j)$.
  
  For each box $C$ in the pivotal grid $G_{\gamma}$, $K_C$ is used to denote the set of nodes which have source-messages ($|K_C| \leq k$). The only information each node $v$ initially has about $K_C$ is whether $v \in K_C$.

	\paragraph{Backbone structure} A \emph{backbone structure}, for a given communication graph $G$, is a subnetwork $H$ which forms a connected dominating set of $G$ with asymptotically the same diameter $D$. The backbone is constructed by selecting a \emph{leader} from each box of $G_{\gamma}$ and a constant number of \emph{helper} nodes to ensure connectivity between neighboring boxes in different directions. Since $H$ has a constant number of nodes in each box, there exists a constant $d$ such that, with $d$-dilution, every node in $H$ can successfully transmit in a constant number of rounds.

\paragraph{Selective families and selectors}
  A family $S=(S_0,\ldots,S_{s-1})$ of subsets of $[N]$ is a {\em $(N,x)$-SSF (Strongly-Selective Family)} of length $s$ if, for every non empty subset $Z$ of $[N]$ s.t.  $|Z|\leq x$ and for every element $z\in Z$, there is a set $S_i$ in $S$ for which $S_i\cap Z=\{z\}$. It is known from ~\cite{ClementiMS01} that there exists $(N,x)$-SSF of size $O(x^2\log N)$ for every $x\leq N$.

  We identify a family of sets $S=(S_0,\ldots,S_{s-1})$ with the broadcast schedule $S'$ such that the $i^{th}$ bit of $S'(v)$ is equal to $1$ iff $v\in S_i$.

  Let $N$, $x$ and $y$ be positive integers so that $y \leq x \leq N$. Let $S$ be a family of subsets of $[N]$. Following \cite{BonisGV03}, we say that $S$ is an \textit{$(N, x, y)$-selector} if for each set $A \subset [N]$ of size $|A| = x$, there are at least $y$ elements in $A$ that can be selected from $A$ by sets in $S$. It is also known that for $y=c x$, where $c\in (0,1)$ is a constant, there is an $(N, x, y)$-selector of size $O(x\log N)$.

\def\GRID{

Given a parameter $c>0$, we define 
a partition of the $2$-dimensional space
into square boxes of size $c\times c$ by the grid $G_c$, in such a way that:
all boxes are aligned with the coordinate axes,
point $(0,0)$ is a grid point,
each box includes its left side without the top
endpoint and its bottom side without the right endpoint and
does not include its right and top sides.
We say that $(i,j)$ are the coordinates
of the box with its bottom left corner located at $(c\cdot i, c\cdot j)$,
for $i,j\in \INT$. A box with coordinates
$(i,j)\in\INT^2$ is denoted $C(i,j)$.
As observed in \cite{DessmarkP07,EmekGKPPS09}, the {\em grid} $G_{r/\sqrt{2}}$
is very useful in the design of the algorithms for geometric radio networks, provided
$r$ is equal to the range of each station.
This follows from the
fact that $r/\sqrt{2}$ is the largest parameter of a grid such that each
station in a box is
in the range of every other station in that box.
In the following, we fix $\gamma=r/\sqrt{2}$, where $r=(1+\eps)^{-1/\alpha}$, and call $G_{\gamma}$
the {\em pivotal grid}. If not stated otherwise, our considerations will
refer to (boxes of) $G_{\gamma}$.

Two boxes $C,C'$ are {\em neighbors} in a network if there are
stations $v\in C$ and $v'\in C'$ such that edge $(v,v')$ belongs to the
communication graph of the network. Boxes $C(i,j)$ and $C'(i',j')$ are {\em adjacent} if
$|i-i'|\leq 1$ and $|j-j'|\leq 1$ (see Figure~\ref{fig:adjacent}).
\begin{figure}
\begin{center}
\epsfig{file=adjacent.eps, scale=0.8}
\end{center}
\caption{If $v,w,z$ are in the range are of $u$, then boxes
containing $v,w,$ and $z$ are neighbors of $C$. The first figure
contains all $20$ boxes which can be neighbors of $C$.
The boxes
$C_1,\ldots,C_8$ are adjacent to $C$.}
\label{fig:adjacent}
\end{figure}%
For a station $v$ located in position $(x,y)$ on the plane we define its {\em grid
coordinates} with respect to the grid $G_c$ as the pair of integers $(i,j)$ such that the point $(x,y)$ is located
in the box $C(i,j)$ of the grid $G_c$ (i.e., $ic\leq x< (i+1)c$ and
$jc\leq y<(j+1)c$).
If not stated otherwise, we will refer to grid coordinates with respect
to the pivotal grid.

A (general) {\em broadcast schedule} $\mathcal{S}$ of length $T$
wrt  $N\in\NAT$ is a mapping
from $[N]$ to binary sequences of length $T$.
A station
with identifier $v\in[N]$ {\em follows}
the schedule $\m{S}$ of length $T$ in a fixed period of time consisting of $T$ rounds,
when
$v$ transmits a message in round $t$ of that period iff
the 
position $t\mod T$ of
$\m{S}(v)$ is equal to $1$.

A {\em geometric broadcast schedule} $\mathcal{S}$ of length $T$
with parameters $N,\delta\in\NAT$, $(N,\delta)$-gbs for short, is a mapping
from $[N]\times [0,\delta-1]^2$ to binary sequences of length $T$.
Let $v\in[N]$ be a station whose grid coordinates
with respect to
the grid $G_c$ are equal to $(i,j)$.
We say that $v$ {\em follows}
$(N,\delta)$-gbs $\m{S}$ 
for the grid $G_c$
in a fixed period of time, 
when $v$ transmits a message in round $t$ of that period iff
the $t$th position of
$\m{S}(v,i\mod \delta,j\mod\delta)$ is equal to $1$.
A set of stations $A$ on the plane is {\em $\delta$-diluted} wrt $G_c$, for $\delta\in\NAT\setminus\{0\}$, if
for any two stations $v_1,v_2\in A$ with grid coordinates $(i_1,j_1)$ and $(i_2,j_2)$, respectively,
the relationships $(|i_1-i_2|\mod \delta)=0$ and $(|j_1-j_2|\mod \delta)=0$ hold.

Let $\m{S}$ be a general broadcast schedule wrt $N$ of length $T$,
let $c>0$ and $\delta>0$, $\delta\in\NAT$.
A $\delta$-dilution of a  $\m{S}$ 
is defined as a $(N,\delta)$-gbs $\m{S}'$ such that the bit $(t-1)\delta^2+a\delta+b$
of $\m{S}'(v,a,b)$ is equal to $1$ iff the bit $t$ of $\m{S}(v)$
is equal to $1$. That is, each round $t$ of $\m{S}$ is
partitioned
into $\delta^2$ rounds
of $\m{S}'$, indexed by pairs $(a,b)\in [0,\delta-1]^2$, such that a station
with grid coordinates $(i,j)$ in $G_c$ is
allowed to send messages only in rounds with index $(i\mod\delta,j\mod\delta)$,
provided schedule $\m{S}$ admits a transmission in its (original) round $t$.
Since we will usually apply dilution to the pivotal grid, it is assumed that all references
to a dilution concern that grid, unless stated otherwise.

Observe that, since ranges of stations are equal to the length
of diagonal of boxes of the pivotal grid, a box $C(i,j)$ can have at most
$20$ neighbors (see Figure~\ref{fig:adjacent}).
We define the set $\DIR\subset[-2,2]^2$ such  that $(d_1,d_2)\in\DIR$ iff
it is possible that boxes with coordinates $(i,j)$ and $(i+d_1,j+d_2)$
can be neighbors.
Given $(i,j)\in\INT^2$ and $(d_1,d_2)\in\DIR$, we say that the box $C(i+d_1,j+d_2)$
is {\em located in direction} $(d_1,d_2)$ from the box $C(i,j)$.

Let $0<x<$
Let $I_1=[i_1,j_1)$, $I_2=[i_2,j_2)$ be segments on a
line, whose endpoints belong to the grid $G_x$.
%
The {\em box-distance} between $I_1$ and $I_2$ with respect go $G_x$ is zero when $I_1\cap I_2\neq\emptyset$,
and it is equal to $\min(|i_1-j_2|/x, |i_2-j_1|/x)$ otherwise. Given two rectangles
$R_1$, $R_2$, whose nodes belong to $G_x$, the box-distance $\distM(R_1,R_2)$ between $R_1$
and $R_2$ is equal to the maximum of the
box-distances between projections
of $R_1$ and $R_2$ on the axes defining the first and the second dimension in the Euclidean
space.
%


Now,
we present a proposition which shows that, if the density of transmitting stations is bounded,
their messages can be heard in a distance related to the density parameter.
\begin{proposition}\labell{prop:lead1}
For each $\alpha> 2$ 
and $\eps>0$,
there exists a 
\tjj{constant $d_{\alpha}$}
such that
the following properties hold.
Assume that a set of $n$ stations $A$ is $d$-diluted wrt the grid $G_x$, where $x=\gamma/c$, $c\in\NAT$, $c>1$
and $d\geq d_{\alpha}$.
Moreover, at most one station from $A$ is located in each box of $G_x$. Then,
if all stations from $A$ transmit simultaneously, each of them
is $\frac{2r}{c}$-successful. Thus, in particular, each station from
a box $C$ of $G_x$ can transmit its message
to all its neighbors located in $C$ and in boxes $C'$ of $G_x$ which are adjacent to $C$.
\end{proposition}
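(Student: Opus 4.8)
The plan is to fix a transmitter $v\in A$ and an arbitrary receiving station $u\notin A$ with $\dist(v,u)\le \tfrac{2r}{c}$, and to verify that $u$ hears $v$, i.e. that reception conditions (a) and (b) both hold with $\cT=A$. The geometric payoff is recorded first: boxes of $G_x$ have side $x=\gamma/c=r/(c\sqrt2)$, so two points lying in a box $C$ and in $C$ itself or in a box adjacent to $C$ are at Euclidean distance at most $2\sqrt2\,x=\tfrac{2r}{c}$; hence $\tfrac{2r}{c}$-success is exactly what the ``in particular'' clause asks for. Condition (a), namely $\dist(v,u)^{-\alpha}\ge 1+\eps$ after normalizing $P=\beta=\cN=1$, is immediate: since $c\in\NAT$, $c>1$ gives $c\ge2$ and thus $\tfrac{2r}{c}\le r$, so $\dist(v,u)^{-\alpha}\ge r^{-\alpha}=1+\eps$ because $r=(1+\eps)^{-1/\alpha}$.

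The substance is condition (b), for which I would bound the aggregate interference at $u$; here the dilution hypothesis does the work. Since $A$ is $d$-diluted on $G_x$, every station of $A$ has grid coordinates congruent to those of $v$ modulo $d$ in each axis, so with at most one station per box the transmitters occupy a single coset of the sublattice $d\INT\times d\INT$, one per occupied box. Measuring box offsets from $v$'s box, I index an interferer $w\in A\setminus\{v\}$ by $(a_w,b_w)\in\INT^2\setminus\{(0,0)\}$ and put $m_w=\max(|a_w|,|b_w|)\ge1$. Box geometry gives $\dist(w,v)\ge(m_wd-1)x$, and the triangle inequality with $\dist(v,u)\le2\sqrt2\,x$ yields $\dist(w,u)\ge(m_wd-1-2\sqrt2)x\ge\tfrac12 m_wdx$ once $d$ exceeds a small absolute constant (e.g. $d\ge8$). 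Grouping interferers by $m_w=m$, the $m$-th square ring of the coset contains at most $8m$ of them.

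Summing over rings then gives
\begin{equation}
\sum_{w\in A\setminus\{v\}}\dist(w,u)^{-\alpha}\;\le\;8\cdot2^{\alpha}(dx)^{-\alpha}\sum_{m\ge1}m^{1-\alpha}\;=\;8\cdot2^{\alpha}\,\zeta(\alpha-1)\,(dx)^{-\alpha},
\end{equation}
where the series converges precisely because $\alpha>2$. Against this I lower-bound the signal by $\dist(v,u)^{-\alpha}\ge(2\sqrt2\,x)^{-\alpha}$. The decisive observation is that both quantities scale as $x^{-\alpha}$, so their ratio is independent of $x$ (hence of $c$): the interference-to-signal ratio is at most $8\,\zeta(\alpha-1)(4\sqrt2)^{\alpha}d^{-\alpha}$. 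Choosing $d_\alpha$ large enough that this is at most $\tfrac{\eps}{1+\eps}$ forces the interference $I$ to satisfy $I\le\tfrac{\eps}{1+\eps}\dist(v,u)^{-\alpha}$, whence $\dist(v,u)^{-\alpha}-I\ge\tfrac{1}{1+\eps}\dist(v,u)^{-\alpha}\ge1=\cN$ by condition (a); this is exactly $SINR(v,u,A)\ge1$.

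The delicate point I would watch most carefully is this scale invariance: because the dilution spacing is counted in boxes of $G_x$, refining the grid (larger $c$) shrinks $x$ and inflates the absolute interference, but it inflates the signal by the same factor, so a single $d_\alpha$ works uniformly in $c$. The second essential ingredient is the convergence of $\sum_{m\ge1}m^{1-\alpha}$, which is where $\alpha>2$ is genuinely used and without which no finite dilution could suffice; everything else is the routine ring-counting and triangle-inequality bookkeeping sketched above.
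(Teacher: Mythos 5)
Your proof is correct and follows essentially the same route as the paper's: both arguments bound the aggregate interference at the receiver by grouping the $d$-diluted transmitters into concentric square rings (at most $O(m)$ interferers in ring $m$, each at distance $\Omega(mdx)$ from the receiver), rely on the convergence of $\sum_{m\ge 1} m^{1-\alpha}$ for $\alpha>2$, and absorb the ambient noise using the slack $1-r^{\alpha}=\eps/(1+\eps)$ that condition (a) provides. The only cosmetic differences are that you handle the transmitter--receiver offset by a triangle inequality (requiring $d\ge 8$) where the paper adjusts box-distances (working with $\bar d\ge d-3$), and you make the uniformity in $c$ explicit via scale invariance of the interference-to-signal ratio where the paper obtains it by invoking $c\ge 2$ in its final inequality.
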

\begin{proof}
Recall that $r=(1+\eps)^{-1/\alpha}$ and $\gamma=r/\sqrt{2}$.
Consider any station $u$ in distance smaller or equal to $\frac{2r}{c}\leq 2\sqrt{2}x<3x$ to a station $v\in A$.
Then, the signal from $v$ received by $u$ is at least
$$\frac1{\left(\frac{2r}{c}\right)^{\alpha}}=\left(\frac{c}{2r}\right)^{\alpha}.$$
Now, we would like to derive an upper bound on interferences caused by stations in $A\setminus\{v\}$
at $u$.
Let $C$ be a box of $G_x$ which contains $v$.
The fact that $A$ is $d$-diluted
wrt $G_x$ implies that the number of boxes containing elements of $A$
which are in box-distance
$id$ from $C$ is at most $8(i+1)$ (see Figure~\ref{fig:gran}).
Moreover, no box in distance $j$ from $C$ such
that ($j\mod d\neq 0$) contains elements of $A$.
\begin{figure}
\begin{center}
\epsfig{file=dist0.eps, scale=0.8}
\end{center}
\caption{Boxes in distance $id$ from $C$ form a frame partitioned into four
rectangles of size $x\times (2id+2)x$. Each of these rectangles contain at most $i+1$
boxes such that any two of them are in box-distance at least $d$.}
\label{fig:gran}
\end{figure}%
Finally, for a station $v\in C$ and a station $w\in C'$ such that $\distM(C,C')=j$,
the inequality $\dist(v,u)\geq jx$ is satisfied.
Note that our goal is {\em not} to evaluate interferences at $v\in C$, but at any station
$u$ such that $\dist(u,v)\leq \frac{2r}c<3x$. Therefore, $u\in C'$ such that $\distM(C,C')<3$, where
$C'$ is a box of $G_x$.
For a fixed $d>3$, the total noise and interferences $I$ caused by
all elements of $A\setminus\{v\}$ at $u$ is at most
$$\cN+\sum_{i=1}^{n}8(i+1)\cdot\frac{1}{(i\bar{d}x)^{\alpha}}$$
where $d\geq\bar{d}\geq d-3$,
since there are at most $8(i+1)$ nonempty boxes in box-distance $i\cdot d$
from the box $C$
in $d$-diluted instance and the box-distance between $C$ and the box $C'$ containing $u$
is at most $2$. Furthermore,
$$I\leq 1+ 8\cdot\left(\frac{1}{\bar{d}x}\right)^{\alpha}\cdot\sum_{i=0}^{n}(i+1)^{1-\alpha}\leq 1+8\left(\frac{c\sqrt{2}}{r \bar{d}}\right)^{\alpha}\sum_{i=1}^{n}i^{1-\alpha}=1+8d_{\alpha}\left(\frac{\sqrt{2}c}{r\bar{d}}\right)^{\alpha}$$
where $d_{\alpha}=\sum_{i=1}^{n}i^{1-\alpha}\leq 1+\zeta(\alpha-1)$, $\zeta$ is the Riemann zeta function and $\cN=1$.
So, 
the signal from $v$ is received at $u$ if the following
inequality is satisfied
\begin{equation}\label{eq:signal}
1+8d_{\alpha}\left(\frac{\sqrt{2}c}{r\bar{d}}\right)^{\alpha}\leq \left(\frac{c}{2r}\right)^{\alpha}
\end{equation}
which is equivalent to
$$\bar{d}\geq 2\sqrt{2}\left(\frac{8d_{\alpha}}{1-(2r/c)^{\alpha}}\right)^{1/\alpha}.$$
Assuming that $c\geq 2$, we have $1-(\frac{2r}{c})^{\alpha}\geq 1-r^{\alpha}$
and therefore (\ref{eq:signal}) is satisfied for each
$\bar{d}\geq 2\sqrt{2}\left(\frac{8}{1-r^{\alpha}}\right)^{1/\alpha}d_{\alpha}$
or $d\geq 3+2\sqrt{2}\left(\frac{8}{1-r^{\alpha}}\right)^{1/\alpha}d_{\alpha}$.


\qed
\end{proof}

The following corollary is a straightforward application of Proposition~\ref{prop:lead1} for $c=2$.
\begin{corollary}\labell{cor:dilsuc}
For each $\alpha>2$ 
there exists
a 
constant $d_{\alpha}$
such that the following
property is satisfied:
Let $A$ be a set of $O(n)$ stations  on the plane which is
$\delta$-diluted wrt the pivotal grid $G_{\gamma}$,
where $\delta\geq d_{\alpha}$ and each box
contains at most one element of $A$. Then, if all elements of $A$ transmit
messages simultaneously in the same round $t$ and no other station is transmitting
a message in $t$, each of them transmits successfully.
\end{corollary}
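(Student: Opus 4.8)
The plan is to derive the corollary as the $c=2$ specialization of Proposition~\ref{prop:lead1}. First I would observe that ``successful'' transmission, as required here, means that every neighbour of a transmitter hears it, i.e.\ every station within Euclidean distance $r$ receives the message; this is exactly the notion of being $\tfrac{2r}{c}$-successful for $c=2$, since $\tfrac{2r}{2}=r$. Thus it suffices to reproduce, on the pivotal grid $G_\gamma$, the signal-versus-interference estimate carried out in the proof of Proposition~\ref{prop:lead1}, now targeting an arbitrary receiver $u$ with $\dist(v,u)\le r$ for the transmitting station $v$.

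Next I would extract the two geometric facts that the interference estimate relies on, directly from $\delta$-dilution with respect to $G_\gamma$ together with the ``one station per box'' hypothesis. Writing $C$ for the $G_\gamma$-box of $v$: (i) any two distinct occupied boxes have grid coordinates congruent modulo $\delta$, so they differ by a nonzero multiple of $\delta$ in some coordinate and hence sit at box-distance at least $\delta$ and at Euclidean distance at least $(\delta-1)\gamma$; and (ii) the number of occupied boxes at box-distance $i\delta$ from $C$ is at most $8(i+1)$, by the same frame-counting argument (a frame at box-distance $i\delta$ splits into four $\gamma\times(2i\delta+2)\gamma$ strips, each holding at most $i+1$ boxes that are pairwise $\delta$-separated). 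Since $r=\sqrt2\,\gamma<2\gamma$, the receiver $u$ lies in a box at box-distance at most $2$ from $C$, so every interfering station is at Euclidean distance at least $(i\delta-3)\gamma$ from $u$.

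I would then close the estimate exactly as in Proposition~\ref{prop:lead1}. The power that $u$ receives from $v$ is at least $r^{-\alpha}=1+\eps$ (because $\dist(v,u)\le r$), which already gives condition~(a). The aggregate noise-plus-interference at $u$ is bounded by
\[
\cN+8\sum_{i\ge 1}(i+1)\bigl((i\delta-3)\gamma\bigr)^{-\alpha}
=1+O\!\left(\delta^{-\alpha}\right),
\]
where convergence of the series uses $\alpha>2$ and the remaining constant is absorbed into a term depending only on $\alpha$ and $\gamma$. Hence the SINR at $u$ is at least $(1+\eps)/(1+O(\delta^{-\alpha}))$, which is $\ge 1=\beta$ as soon as the $O(\delta^{-\alpha})$ term drops below $\eps$; this holds for all $\delta\ge d_\alpha$ with a suitable constant $d_\alpha=d_\alpha(\alpha,\eps)$, establishing both SINR conditions and therefore successful transmission.

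The one point that needs care---and the main obstacle---is the grid mismatch: Proposition~\ref{prop:lead1} with $c=2$ is phrased in terms of dilution with respect to the finer grid $G_{\gamma/2}$, whereas the hypothesis here is dilution with respect to $G_\gamma$. I would resolve this not by converting $G_\gamma$-dilution into exact $G_{\gamma/2}$-dilution (which fails, since refining the grid shifts each coordinate by $0$ or $1$ and destroys exact periodicity modulo $\delta$), but by running the interference estimate natively on $G_\gamma$ as above; the argument only ever uses the per-frame count and the minimum separation, both of which $G_\gamma$-dilution supplies. Equivalently, $\delta$-dilution on $G_\gamma$ forces pairwise Euclidean separation at least $(\delta-1)\gamma$, i.e.\ separation $2(\delta-1)$ measured on the $G_{\gamma/2}$ scale, which clears the proposition's threshold once $\delta\ge d_\alpha$, with a packing argument recovering the per-frame bound.
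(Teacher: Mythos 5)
Your proposal is correct, and it in fact does more work than the paper, whose entire proof of this corollary is the single sentence that it is a ``straightforward application of Proposition~\ref{prop:lead1} for $c=2$.'' You start from that same specialization, but then you correctly observe that the literal citation does not typecheck: with $c=2$ the proposition assumes $d$-dilution with respect to the finer grid $G_{\gamma/2}$ (with at most one station per $G_{\gamma/2}$-box), while the corollary's hypothesis is $\delta$-dilution with respect to $G_\gamma$, and exact congruence of grid coordinates modulo $\delta$ is not preserved under grid refinement (each coordinate doubles and then shifts by $0$ or $1$ depending on where the station sits inside its box). Your repair---rerunning the interference estimate natively on $G_\gamma$, using only the two consequences of dilution that the estimate actually needs, namely the $8(i+1)$ bound on occupied boxes at box-distance $i\delta$ and the minimum Euclidean separation $(i\delta-3)\gamma$ from a receiver lying within box-distance $2$ of the transmitter's box---is exactly the right one, and your closing computation (received signal at least $r^{-\alpha}=1+\eps$, noise plus interference at most $1+O(\delta^{-\alpha})$ with convergence from $\alpha>2$, hence both reception conditions once $\delta\geq d_\alpha(\alpha,\eps)$) matches the proposition's proof step for step. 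In short, you take the same route as the paper but actually walk it: the paper's one-liner buys brevity at the cost of glossing over a real, though benign, grid mismatch, and your version is the self-contained argument that makes the corollary true as stated, precisely because the estimate only uses separation and counting, both of which survive the change of grid. One cosmetic point: a strip of $2i\delta+2$ consecutive boxes can hold up to $2i+1$, not $i+1$, boxes with coordinates congruent modulo $\delta$ (the same slip appears in the paper's own figure caption); the aggregate frame bound $8i+4\leq 8(i+1)$ is unaffected, so nothing downstream changes.
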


}

\section{Centralized Setting}
\label{sec:central}
  In the centralized setting, every node has complete knowledge of the coordinates of all other stations. With respect to the vanilla broadcast problem in the centralized setting there are two additional complications: (1) Initially, all nodes are asleep except the nodes in $K_C$ and nodes do not know the members of $K_C$. (2) Secondly, unit size messages. We will show how to address these two issues using two different approaches, which either depend or 
do not depend on the value of the granularity parameter $g$.

  For this setting, single source broadcast can be conducted in $O(D)$ rounds \cite{JKS-ICALP-13}. It is easy to see that $\Omega(D + k)$ is a lower bound on $k$-source broadcast with unit size messages. The main enhancements we develop in these protocols is how the $k$-sources are identified and their messages pipelined on the backbone communication structure with transmission of unit size messages only.

\subsection{Granularity independent algorithm}
\label{broad_cent_nongran}
\subsubsection{Overview}
  Within each box $C$ of the pivotal grid, at most one leader $l(K_C)$ is elected out of the $K_C$ active nodes in at most $k \log \Delta$ rounds. This is achieved by $k$ repetitions of the strongly selective family $(\Delta,c)$-SSF, for appropriately constant $c$, in which only the $k$-source nodes participate.

	It was observed in \cite{JKS-ICALP-13} that irrespective of the number of nodes who transmit in a given round, the closes pair can successfully communicate (i.e. one can hear another). In particular, if nodes execute $(\Delta,c)$-SSF, for $c \geq 2$, then both the nodes belonging to the closest pair of nodes can successfully transmit to each other. The one with greater label value can silence itself. This process can be continued till at most one node remains in each box of the pivotal grid. The leader remaining in the box is the root of an undirected tree, where the nodes belonging to the tree were silenced by their parent node.

  However, this process does not preclude nodes belonging to different boxes of pivotal grid from communicating with each other (and being silenced in the manner described) and belonging to the same tree. If we \textit{dilute} this process \dk{in space}, as described in \cite{JKS-ICALP-13}, and further add the restriction that only node belonging to the same box of the pivotal grid can silence another node, then we can be sure that only nodes belonging to the same box of the pivotal grid belong to any tree created thus.
	
  $O(k)$ repetitions of the above process, is guaranteed to leave at most one active source node in each box of the pivotal grid, irrespective of the initial distribution of source nodes in the network. This node, $l(K_C)$, wakes up all the nodes in box $C$ and coordinates individual transmissions from nodes in $K_C$, using dilution \cite{JKS-ICALP-13}. These messages are gathered on the backbone structure $H$ (which is precomputed in the centralized setting) by the leader $l(C)$, and then pipelined on the entire backbone structure. Finally, the leaders of the boxes in the backbone structure push these messages stored with them in their boxes and every node of the network receives these messages.

  The first stage takes $O(k \lg \Delta)$ rounds. Using pipelining, all $k$ messages reach every node in $H$ in $O(D + k)$ steps. This is followed by $O(k)$ rounds to distribute at most $k$ messages at nodes in $H$ to all nodes in $G$, making the complexity of the algorithm to be $O(D + k \lg \Delta)$ (as last stage just repeats the second stage).

\subsubsection{Connected Dominating Set}
  The node with the least label in each box of pivotal grid is considered to be the leader of the box. For each $(i,j) \in DDL$, let the set of nodes in box $C$ which can have neighbors in box $C(i,j)$ be $S_C^{(i,j)}$. The node with the least label from $S_C^{(i,j)}$, denoted by $s_C^{(i,j)}$, is the $(i,j)$ directional sender - it is a helper node to send a message to $C(i,j)$. Similarly we mark a node to be the $(i,j)$ directional receiver from $C(i,j)$. Let $R_C^{(i,j)}$ be the set of nodes which are connected to $s_C^{(i,j)}$. The node with the least label among $R_C^{(i,j)}$ is the directional receiver, denoted by $r_C^{(i,j)}$.

\alglanguage{pseudocode}
\setlength{\textfloatsep}{0pt}
\begin{algorithm*}[t!]
\small
\caption{Compute-Backbone(Vertex $v$, Graph $G(V,E)$)}
\label{algo:back:1}
\begin{algorithmic}[1]
	\State $C \gets \text{box}(v)$ and $l(C) \gets \min\{u, u\in C\}$
	\Comment{\emph{Leader of C}}
	\ForAll {$(i,j) \in DDL$}
		\State $s_C^{(i,j)} \gets \min \{\text{all nodes with neighbours in } C(i,j) \in C\}$
		\Comment {\emph{Directional sender to $C(i,j)$ in $C$}}
		\State $s_{C(i,j)}^{(-i,-j)} \gets \min \{\text{all neighbours of } C \in C(i,j)\}$
		\Comment {\emph{Directional sender to $C$ in $C(i,j)$}}
		\State $r_C^{(i,j)} \gets \min \{\text{all neighbours of } s_{C(i,j)}^{(-i,-j)} \in C\}$
		\Comment {\emph{Directional receiver from $C(i,j)$ in $C$}}
	\EndFor
\Statex
\end{algorithmic}
\end{algorithm*}

\begin{proposition}
 Algorithm \emph{Compute-Backbone}~efficiently computes the leader in each box as well the helper nodes of the leader in each direction, for the centralized setting.
\end{proposition}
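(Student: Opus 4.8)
The plan is to split the claim into its two components --- \emph{correctness} (the algorithm identifies a valid leader and a valid set of directional helpers in every box) and \emph{efficiency} (this identification costs no communication in the centralized setting) --- and to dispatch efficiency first, since it is immediate. In the centralized setting every node knows the coordinates and identifiers of all stations, so each node can locally reconstruct the whole communication graph $G$ (placing an edge between two stations exactly when their Euclidean distance is at most $r$) together with the pivotal-grid partition. Because every quantity computed in \emph{Compute-Backbone} is defined as a minimum over an explicitly described vertex set, with ties broken by the unique labels, each definition is deterministic; hence all nodes compute the \emph{same} leaders and helpers without exchanging a single message. Thus the construction incurs zero communication rounds and only polynomial local computation, which is what ``efficiently'' asks for in this setting.

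For correctness I would first argue that each leader dominates its own box. Since each box of the pivotal grid $G_{\gamma}$ has side $\gamma = r/\sqrt{2}$, any two stations in the same box are at Euclidean distance at most $\sqrt{2}\,\gamma = r$, so they are mutually adjacent in $G$. Consequently $l(C)=\min\{u : u\in C\}$ is adjacent to every other station of $C$, so the union of all box leaders is a dominating set of $G$. This observation is also what makes the three edges used below ``for free.''

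The substantive step is to verify that the directional helpers stitch neighboring leaders together into a connected subgraph $H$ of asymptotically the same diameter. Fix a box $C$ and let $C'$ be the neighbouring box in direction $(i,j)\in DIR$. By the definition of neighbouring boxes, the set $S_C^{(i,j)}$ and the analogous set $S_{C'}^{(-i,-j)}$ are nonempty, so $s_C^{(i,j)}$, $s_{C'}^{(-i,-j)}$ and $r_C^{(i,j)}$ are all well defined; moreover $r_C^{(i,j)}$ is adjacent to $s_{C'}^{(-i,-j)}$ by construction. I would then exhibit the length-three path $l(C) - r_C^{(i,j)} - s_{C'}^{(-i,-j)} - l(C')$, whose two intra-box edges exist by the domination argument above and whose middle edge exists by construction, thereby linking the two leaders. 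Composing such constant-length detours along any path of $G$ shows that $H$ is connected whenever $G$ is and that graph distances are inflated by at most a constant factor, so $H$ has diameter $O(D)$. Finally, since $|DIR|\le 20$ and each direction contributes only a constant number of helpers, $H$ contains $O(1)$ nodes per box, matching the backbone definition and its associated $d$-dilution property.

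The main obstacle I anticipate is the connectivity bookkeeping: one must check that the independently chosen minimum-label senders and receivers actually chain up correctly --- in particular that $s_C^{(i,j)}$ and $s_{C'}^{(-i,-j)}$ need \emph{not} be mutually adjacent, and that it is the receiver $r_C^{(i,j)}$ that supplies the genuine cross-box edge --- and that the qualifying vertex sets are nonempty precisely when the two boxes are neighbours. Handling the degenerate cases cleanly (e.g.\ boxes that are geometrically close yet share no edge of $G$, so that no helper is named in that direction) is where care is needed; everything else reduces to the box-diameter and minimum-label observations above.
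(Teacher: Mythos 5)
Your proposal is correct; in fact the paper states this proposition without any accompanying proof, treating it as immediate from the algorithm's definition in the centralized setting, and your argument supplies exactly the verification being taken for granted: each minimum defining $l(C)$, $s_C^{(i,j)}$, $s_{C(i,j)}^{(-i,-j)}$ and $r_C^{(i,j)}$ is locally and deterministically computable by every node from full topology knowledge (hence zero communication rounds), intra-box adjacency follows from the box diagonal being $\sqrt{2}\,\gamma = r$, and the genuine cross-box edge comes from the receiver being defined as a neighbor of the opposite box's sender. Your path $l(C)\,\text{--}\,r_C^{(i,j)}\,\text{--}\,s_{C'}^{(-i,-j)}\,\text{--}\,l(C')$, together with the $O(1)$-nodes-per-box and diameter-$O(D)$ conclusions, matches the paper's stated backbone properties, so there is no gap or divergence to report.
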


\subsubsection{Message Gathering}
\begin{proposition}
\label{prop:ssf}
  For each $\alpha >2$, there exist constants $d$ and $c$, which depends only on model parameters, satisfying the following property. Let $W$ be a set of stations such that $\min_{u,v\in W, \text{box}(u)=\text{box}(v)}\{dist(u,v)\} = x$ and let $dist(u,v) = x$ for some $u,v \in W, box(u)=box(v)$ and $W$ is $d$-diluted for $d \geq 2$. Then, $v$ can hear the message from $u$ during an execution of a $(N,c)$-SSF on $W$.
\end{proposition}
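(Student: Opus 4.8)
The plan is to exhibit a single round of the $(N,c)$-SSF in which $u$ transmits while $v$ and every station near $v$ stay silent, and then to verify the two reception conditions of the model at $v$ in that round. The signal (strength) condition is immediate and requires no care: since $u$ and $v$ lie in the same box of the pivotal grid $G_{\gamma}$, their distance is at most the box diagonal $\gamma\sqrt{2}=r=(1+\eps)^{-1/\alpha}$, so with uniform powers $P_u=1$ and $\cN=\beta=1$ we get $P_u\,\dist(u,v)^{-\alpha}=x^{-\alpha}\ge(1+\eps)=(1+\eps)\beta\cN$, because $x\le r$. All the real work is therefore in the SINR condition, i.e.\ in bounding the interference at $v$.

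First I would isolate a \emph{near set}. Fix a radius constant $R\ge 1$ (to be chosen later) and let $Z$ be the set of stations of $W$ within Euclidean distance $Rx$ of $v$; note that $u,v\in Z$. The key claim is that $|Z|$ is bounded by a constant $c$ depending only on $\alpha,\eps$. Indeed, within $v$'s box all stations of $W$ are pairwise at distance $\ge x$ (this is the definition of $x$), so a disk of radius $Rx$ contains only $O(R^2)$ of them by a packing argument; and if the dilution constant $d$ is taken larger than $1+R\sqrt{2}$, then by $d$-dilution every other nonempty box of $W$ lies at Euclidean distance $\ge(d-1)\gamma>Rx$ from $v$, so $Z$ is contained entirely in $v$'s box and $|Z|=O(R^2)=:c$. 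The defining property of the $(N,c)$-SSF then yields a round $i$ with $S_i\cap Z=\{u\}$: in that round $u$ transmits, and since $v\in Z\setminus\{u\}$ we have $v\notin S_i$, so $v$ listens and no station of $Z$ other than $u$ contributes interference.

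The crux is bounding the interference at $v$ from the transmitters $\cT$ lying \emph{outside} $Z$, all at distance $>Rx$. Because the signal scale $x$ may be far smaller than the box size $\gamma$, one cannot simply quote the box-scale density estimate and must redo it at scale $x$, splitting the far field into two families. The far \emph{same-box} stations form an $x$-separated point set in $v$'s box at distance $>Rx$; an annulus sum gives interference $\lesssim \frac{R^{2-\alpha}}{\alpha-2}\,x^{-\alpha}$, a fraction of the signal $x^{-\alpha}$ that tends to $0$ as $R\to\infty$. The \emph{other-box} stations are organized, via $d$-dilution, into $O(m)$ boxes at box-distance $md$ (hence Euclidean distance $\gtrsim md\gamma$) from $v$, each holding $O\!\big(1+(\gamma/x)^2\big)$ stations by packing; summing the geometric series $\sum_m m^{1-\alpha}$ (convergent since $\alpha>2$) bounds their contribution by $\lesssim (\sqrt{2})^{\alpha-2}d^{-\alpha}x^{-\alpha}$, again a vanishing fraction of $x^{-\alpha}$ as $d\to\infty$ since $x/\gamma\le\sqrt2$. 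I would therefore choose $R$, and then $d\ge 1+R\sqrt{2}$, both large enough (constants depending only on $\alpha,\eps$) that the total interference is at most $\frac{\eps}{1+\eps}x^{-\alpha}$; with $\cN=1$ this yields $SINR(u,v,\cT)=x^{-\alpha}/\big(1+\tfrac{\eps}{1+\eps}x^{-\alpha}\big)\ge 1=\beta$, which reduces exactly to the already-established inequality $x^{-\alpha}\ge 1+\eps$. The main obstacle is precisely this two-regime estimate: the near interferers at distance $\Theta(x)$ must be silenced by the SSF (so their number must be controlled by packing), while the far field is dense inside $v$'s box but sparse across boxes, and the constants $R$ and $d$ must be tuned jointly so that the residual interference sits below the $\eps$-slack left open by the signal condition.
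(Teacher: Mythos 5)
Your proof is correct; note, though, that the paper itself gives \emph{no} proof of Proposition~\ref{prop:ssf} --- it is stated bare, and the surrounding prose only attributes the underlying fact (that the closest pair of stations can hear each other under an SSF, no matter how many others transmit) to an observation in \cite{JKS-ICALP-13}. So the natural comparison is with the paper's closest in-house analogue, the proof of Lemma~\ref{l:many-tokens}, which handles a similar SSF-plus-interference situation but under the much stronger hypothesis that each pivotal-grid box contains at most one participating station; there a single-scale argument suffices (fix a ball of constant radius $c'r$, let the SSF silence the constantly many participants inside it, and push the far interference below $\eps\beta\cN$ by enlarging $c'$). Your setting is genuinely harder: $W$ may contain many stations per box at mutual distance only $x$, possibly far smaller than $\gamma$, and the signal itself lives at scale $x^{-\alpha}$, so the box-scale ball argument cannot be quoted. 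Your two-scale decomposition is exactly the right generalization: packing at scale $x$ bounds the near set (so an SSF parameter $c=O(R^2)$ silences it), the $x$-separated annulus sum inside the box contributes $O\left(R^{2-\alpha}\right)x^{-\alpha}$, and the dilution lattice at scale $d\gamma$ contributes $O\left(d^{-\alpha}\right)x^{-\alpha}$ (constants depending on $\alpha$), both vanishing fractions of the signal because $\alpha>2$; tuning $R$ first and then $d\ge 1+R\sqrt{2}$ against the $\eps$-slack reduces the SINR condition to the already-verified signal bound $x^{-\alpha}\ge 1+\eps$. Two points your argument relies on and that should stay explicit in any final write-up: the annulus count requires that \emph{all} same-box pairs of $W$ (not only $u,v$) are at distance at least $x$, which is indeed how the paper defines $x$ (a global same-box minimum); and the constants must be fixed in the order $R$, then $d$, then $c$, so that all of them depend only on $\alpha$ and $\eps$, as the statement demands --- your proposal respects both. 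In short, your proof is sound and fills a genuine gap in the paper rather than deviating from an existing argument.
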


  Since every node has knowledge of all nodes in the box, we can assign temporary labels to each node from [$\Delta$]. By Proposition \ref{prop:ssf}, at least one pair of nodes in $K_C$ exchange messages in $O(\lg \Delta)$ rounds (referred to as a \emph{step}) using $(\Delta,c)$-SSF. Of the pair, the node with the larger label drops out of the contest while noting the other which remains as its parent.

  It is not known how many messages each node of $K_C$ has (a single node may contain multiple messages), the leader of $K_C$, denoted by 
\dk{$l(K_C)$,} 
must first collect this information. We define a \emph{message tree} $T$ such that $parent(u) = v$ if at some step $u$ won from $v$. The tree $T$ is a min-heap with every node having smaller label than all its children. By definition a node exchanges messages with all of its children in $T$ and so is aware of their labels. We use this to co-ordinate the following round-robin procedure for exploring the tree $T$. $l(K_C)$ requests each of its children node to sequentially transmit their labels and messages. In this manner, $l(K_C)$ explores the structure of $T$ similar to a Breadth First Search. Only the nodes in $K_C$ participate in this protocol.

\alglanguage{pseudocode}
\setlength{\textfloatsep}{0pt}
\begin{algorithm*}[t!]
\small
\caption{Gran-Independent-Collect-Info(Vertex $v$, Graph $G(V,E)$)}
\begin{algorithmic}[1]
\State $state(v)$ is active 
for all nodes in $K_C$, and all other nodes remain inactive
\Comment{\emph{Collects Information about $K_C$}}
	\While {$state(v)$ = active}
		\State Assign unique temporary IDs (TIDs) in [$|C|$] to all elements of $C$
			\State $v$ transmits $m_v$ encoding $v$, according to its $rank(v)$ using $(|C|,c)$-SSF

      \State $state(v) \gets$ inactive \textbf{if} $v$ hears $m_u, u \in \text{box}(v), \text{and } u < v$ \textbf{else} $v$ stores $m_u$
	\EndWhile
\Statex
\end{algorithmic}
\end{algorithm*}

\begin{proposition}
  Algorithm \emph{Gran-Independent-Collect-Info} elects $l(T_C)$ in $O(k \lg \Delta)$ rounds.
\end{proposition}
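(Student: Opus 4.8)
The plan is to treat the algorithm as a sequence of elimination \emph{steps}, where a single step is one dilution-expanded execution of the $(|C|,c)$-SSF, and to prove two things: (i) each step costs only $O(\lg\Delta)$ rounds, and (ii) each step reduces the number of still-active source nodes in every box by at least one until a unique survivor remains. First I would bound the cost of a single step. Since every two nodes lying in the same box of the pivotal grid are mutual neighbours, $|C|\le\Delta+1$, so an $(|C|,c)$-SSF has length $O(c^2\lg|C|)=O(\lg\Delta)$ because $c$ is a constant depending only on the model parameters. Running it under $d$-dilution (with the constant $d$ of Proposition~\ref{prop:ssf}) inflates the length only by the constant factor $d^2$, so a step still costs $O(\lg\Delta)$ rounds; moreover dilution lets the step be carried out simultaneously in \emph{all} boxes, each box transmitting in its own dilution slot.

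Next I would establish per-step progress. Fix a box $C$ and let $A\subseteq K_C$ be the set of nodes still active at the start of a step, with $|A|\ge 2$. Applying Proposition~\ref{prop:ssf} to the $d$-diluted transmitting set and taking the pair $u,v\in A$ realising the minimum intra-box distance, the pair at minimum intra-box distance mutually exchange messages despite the contention and the interference from other boxes (which is controlled by the $d$-dilution); as already noted in the overview, both nodes of the closest pair are heard. Consequently the larger-labelled node of this pair hears a strictly smaller active label in its own box and, per the update rule, sets its state to inactive while recording the smaller node as its parent. Hence at least one node of $A$ is eliminated in every step for which $|A|\ge 2$. Because $K_C\subseteq K$ and $k$ is an upper bound on $|K|$, we have $|K_C|\le k$, so after at most $k-1$ steps each box is left with exactly one active node.

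I would then argue that the survivor is the correct leader. The restriction that a node may only be silenced by a smaller-labelled node \emph{in the same box}, enforced by the $u\in\mathrm{box}(v)$ condition in the update rule, guarantees that every parent pointer stays inside $C$, so the structure built is a single tree $T_C$ rooted at the survivor; since a node only ever adopts a strictly smaller label as its parent, $T_C$ is a min-heap. The minimum-label node of $K_C$ can never hear a smaller active label and is therefore never silenced, while every other node is eventually eliminated by the progress argument; thus the unique survivor is exactly the minimum-label source of $C$, which is $l(T_C)=l(K_C)$. Combining the two bounds, the election uses $O(k)$ steps of $O(\lg\Delta)$ rounds each, for a total of $O(k\lg\Delta)$ rounds, and since $k$ is known the loop can simply be run for this many rounds.

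The step I expect to be the main obstacle is the per-step progress guarantee: showing that the closest active pair in \emph{each} box communicates in a single SSF execution, uniformly over boxes whose minimum intra-box distances may differ by arbitrary factors. This reduces entirely to Proposition~\ref{prop:ssf}, whose role is precisely to certify, via the selectivity of the SSF (isolating the pair among any $c$ contenders) together with the interference bound afforded by $d$-dilution, that the nearest pair is heard regardless of how many other nodes contend. Once this local guarantee is in hand, the counting argument bounding the number of steps by $|K_C|\le k$ and the min-heap correctness of the survivor are routine.
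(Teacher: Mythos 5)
Your overall skeleton---SSF executions as $O(\lg\Delta)$-round steps, dilution to run them in parallel across boxes, elimination of the larger label in a communicating pair, the same-box silencing restriction, and the min-heap argument identifying the survivor with the minimum-label source---matches the paper's argument (which is given in the overview and the text around Proposition~\ref{prop:ssf}, not as a formal proof). The genuine gap is exactly the step you flag as the crux: you claim that in every step the closest active pair \emph{in each box} exchanges messages, and that this ``reduces entirely to Proposition~\ref{prop:ssf}.'' It does not. That proposition certifies reception only for the single pair $u,v$ realising the \emph{global} minimum intra-box distance $x$ over the whole concurrently transmitting set $W$; it says nothing about a box $C'$ whose closest active pair is at distance $y\gg x$. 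Moreover, the per-box statement is not merely unproved but false in general: the signal of interest in $C'$ has strength proportional to $y^{-\alpha}$, while some distant box in the same dilution class may contain up to $\Theta\left((\gamma/x)^2\right)$ active sources (pairwise at distance at least $x$), and an SSF with \emph{constant} selectivity parameter $c$ cannot prevent almost all of them from transmitting in the round designated for the sender in $C'$; their aggregate interference at the receiver scales like $(\gamma/x)^2(d\gamma)^{-\alpha}$, which exceeds $y^{-\alpha}$ by an unbounded factor once $x/\gamma\to 0$, whereas it remains negligible compared to $x^{-\alpha}$ for the globally closest pair (this asymmetry is precisely why Proposition~\ref{prop:ssf} is stated only for the minimum-distance pair).

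The repair is immediate, and it is what the paper's counting implicitly relies on: you only need \emph{global} progress. As long as some box still contains at least two active sources, Proposition~\ref{prop:ssf} guarantees that the pair realising the global minimum intra-box distance exchanges messages during one diluted SSF execution, so at least one source is silenced in every $O(\lg\Delta)$-round step. Replace your per-box count $|K_C|\le k$ by the global count: the total number of active sources across \emph{all} boxes is at most $|K|\le k$, so after at most $k$ steps no box contains two active nodes. Your monotonicity argument (the minimum-label node of $K_C$ can never be silenced, and parent pointers stay inside $C$) is correct as written and identifies the unique survivor with $l(K_C)$, giving the claimed $O(k\lg\Delta)$ bound.
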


\alglanguage{pseudocode}
\setlength{\textfloatsep}{0pt}
\begin{algorithm*}[t!]
\small
\caption{Gather-Message(Vertex $v$, Graph $G(V,E)$)}
\label{prot:gath}
\begin{algorithmic}[1]
	\State $q =$ empty queue
	\State $q.$enqueue($l(T_C)$)
	\While {not $q.$empty}
		\State $u \gets q.$dequeue()
		\State $T_C$ requests $u$ to start transmitting.
		\State $u$ transmits each of its children $w$, $l(w)$, and its messages sequentially
		\State $q.$enqueue($u.$children())
	\EndWhile
\Statex
\end{algorithmic}
\end{algorithm*}

\begin{proposition}
  Algorithm \emph{Gather-Message} ensures all the messages in a box (if any) are collected by the leader $l(C)$ of each box $C$ in $O(k)$ rounds.
\end{proposition}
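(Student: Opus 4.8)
The plan is to establish two things: \emph{correctness}, that every rumour held by a source in box $C$ reaches the collector, and the \emph{round bound} of $O(k)$. I would first recall what the preceding stage guarantees: the silencing process of \emph{Gran-Independent-Collect-Info} leaves a single surviving source $l(T_C)$ in each box $C$ together with a tree $T_C$ whose parent pointers always lead from a larger label to a strictly smaller one (the min-heap property), and in which a node was silenced only by another node of the \emph{same} box. Since following parent pointers strictly decreases the label and every non-root node of $K_C$ has exactly one parent in the same box, the parent forest is acyclic and every node of $K_C$ has a path up to the unique minimum-label source $l(T_C)$; hence $T_C$ is a spanning tree of $K_C$. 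This structural fact is what everything rests on.

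For correctness I would argue that the BFS queue of \emph{Gather-Message} dequeues each node of $T_C$ exactly once, and that when a dequeued node $u$ is asked to transmit, its rumours are heard by the collector. The second point is where the geometry enters: all nodes of a box lie within a square of side $\gamma=r/\sqrt2$, so any two of them are at Euclidean distance at most $r$ and are therefore mutual neighbours in the communication graph. Consequently, whenever $u$ is the only transmitter in $C$ whose signal is not destroyed by interference, the collector hears $u$. To guarantee the interference condition while still letting every box make progress, I would run all boxes concurrently under $d$-dilution for the model constant $d=d_\alpha$: in each actual round we activate one of the $d^2$ residue classes of boxes, so the active transmitters are $d$-diluted with at most one per box, and the dilution property (as used throughout~\cite{JKS-ICALP-13}) then ensures each of them reaches all its neighbours, in particular its own collector. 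Thus over $d^2$ actual rounds every box advances its traversal by exactly one transmission, with no cross-box interference.

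For the round bound I would account the work inside one box. During the traversal each node $u$ is processed once and transmits the labels of its children together with the rumours it stores; summed over the tree, the children contribute $|K_C|-1$ transmissions (the number of tree edges) and the rumours contribute $m_C$ transmissions, where $m_C$ denotes the number of rumours held in $C$. Adding the $O(|K_C|)$ rounds the collector spends issuing BFS-order requests, the cost of box $C$ is $O(|K_C|+m_C)$. Since every source holds at least one rumour we have $|K_C|\le m_C\le k$, so each box finishes in $O(k)$ logical steps; because the boxes run concurrently and one logical step costs only $d^2=O(1)$ actual rounds, the whole stage terminates in $O(k)$ rounds, as claimed.

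The main obstacle is the interface between the \emph{online} BFS and the interference argument: the collector does not know $T_C$ in advance and only learns the children of $u$ when $u$ transmits, so the ``who transmits now'' schedule is built on the fly, and I must check that the request/response alternation still reaches every box through a common diluted schedule and that its control overhead stays within the $O(\lg n)$-bit message budget (one label or one rumour per message). A secondary point to tidy up is that the traversal collects at $l(T_C)$, the least-label \emph{source}, whereas the statement names the box leader $l(C)$, the least-label node of the box; since both lie in $C$ and are mutual neighbours, $l(T_C)$ can relay the at most $m_C$ gathered rumours to $l(C)$ in $O(m_C)=O(k)$ further rounds, which is absorbed into the bound.
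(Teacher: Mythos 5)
Your proof is correct and takes essentially the approach the paper intends (the paper states this proposition without an explicit proof, relying on the surrounding text): the min-heap message tree $T_C$ inherited from \emph{Gran-Independent-Collect-Info}, BFS-style coordination by its root, the count of $(|K_C|-1)+m_C=O(k)$ transmissions plus $O(|K_C|)$ requests, and constant-factor dilution to suppress cross-box interference. Your one deviation---relaying the gathered rumours from $l(T_C)$ to $l(C)$ at the end---is harmless but unnecessary: by your own observation that all nodes of a box are mutual neighbours, the (listening, non-spontaneously awake) box leader $l(C)$ already overhears every in-box transmission, which is exactly how the paper has $l(C)$ initialise its message list in \emph{Push-Messages}.
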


\subsubsection{Message dissemination on backbone structure}
  Each iteration of \emph{Push-Messages} is aimed at each node successfully transmitting a new (first so-far unsent) message to all its neighbors in the backbone. This ensures that in $O(D+K)$ rounds of transmission, all the $k$ messages are received by all the nodes. Note that only nodes in $H$ participate. After the messages have reached all nodes in $H$, messages can be sent trivially to all remaining nodes in $G$ in $O(k)$ transmission rounds.

\alglanguage{pseudocode}
\setlength{\textfloatsep}{0pt}
\begin{algorithm*}[t!]
\small
\caption{Push-Messages(Vertex $v$, Graph $G(V,E)$)}
\begin{algorithmic}[1]
	\State $H_C \gets H \cap C$
	\ForAll{Node $w$ in $H_C$}
		\State Let $rec\_msg\_list \gets$ list of all messages received,
		initially for $l(C)$, $rec\_msg\_list$ is set to list of messages from MSG-GTH.
		\State $w$ transmits first message in $rec\_msg\_list$, so far not transmitted, in round $rank$.
	\EndFor
\Statex
\end{algorithmic}
\end{algorithm*}

\begin{proposition}
  Algorithm \emph{Push-Messages} ensures every node in $H$ successfully transmits a message to all of its neighbours in $O(1)$ rounds.
\end{proposition}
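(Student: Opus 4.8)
The plan is to show that the nodes of the backbone $H$ lying within a single box, together with the helper nodes bridging to neighboring boxes, can all transmit successfully in $O(1)$ rounds by appealing to the dilution machinery already established. Recall from the backbone construction that each box of the pivotal grid $G_\gamma$ contributes only a constant number of nodes to $H$: one leader $l(C)$ plus a constant number of directional senders $s_C^{(i,j)}$ and receivers $r_C^{(i,j)}$ indexed by $(i,j)\in DIR$. Since $|DIR|\le 20$, the number of backbone nodes per box is bounded by some absolute constant $h$. The \emph{Push-Messages} protocol assigns each such node a $rank$ within its box, so in each of the $h$ rounds of a \emph{step}, every box has at most one of its $H$-nodes attempting to transmit.

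First I would fix the dilution parameter. By Corollary~\ref{cor:dilsuc}, there is a constant $d_\alpha$ such that any set of stations that is $\delta$-diluted with respect to $G_\gamma$, with $\delta\ge d_\alpha$ and at most one station per box, transmits successfully when all transmit simultaneously. The backbone paragraph already records that $H$ admits a constant $d$ such that, under $d$-dilution, every node of $H$ transmits successfully in a constant number of rounds. So the second step is to interleave the $rank$-based schedule within each box with a $d$-dilution across boxes: partition the rounds of a step into $d^2$ dilution slots indexed by $(a,b)\in[0,d-1]^2$, and let an $H$-node in box with grid coordinates $(i,j)$ and local rank $\rho$ transmit in the slot matching both its residue $(i\bmod d, j\bmod d)$ and its rank $\rho$. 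In any fixed such slot the transmitting set is $d$-diluted and has at most one station per box, so Corollary~\ref{cor:dilsuc} applies and every transmitter in that slot is heard by all its neighbors.

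The total number of slots is $d^2\cdot h$, a product of two constants, hence $O(1)$ rounds per step, which is exactly the claim. Crucially, because a node delivers its message to \emph{all} of its communication-graph neighbors (directional senders were defined precisely as the least-label nodes capable of reaching the target neighboring box, and directional receivers as their least-label in-box neighbors), ranging over all $h$ ranks guarantees that each directional link of the backbone is successfully traversed within the step. I would make explicit that the leader-to-helper and helper-to-neighbor-box links are all covered: for every $(i,j)\in DIR$ the pair $(s_C^{(i,j)}, r_{C(i,j)}^{(-i,-j)})$ forms an edge of $G$, and by construction both endpoints lie in $H$ and fire in some rank slot.

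The main obstacle is verifying the hypotheses of Corollary~\ref{cor:dilsuc} are genuinely met inside each slot, rather than just invoking it: one must check that (a) within a single slot no two transmitting $H$-nodes share a box, which follows because the $rank$ assignment is per-box injective and we additionally refine by the dilution residue; and (b) the at-most-one-per-box and $d$-dilution conditions together hold, which is immediate from the residue filtering. A subtler point is that successful transmission requires delivery to neighbors in \emph{adjacent} boxes up to box-distance $2$ (since $DIR\subset[-2,2]^2$), whereas Corollary~\ref{cor:dilsuc} as stated guarantees success to all communication neighbors under $c=2$ dilution; I would note that this is exactly the regime the corollary was proved for, so the bridging links at distance up to $2$ boxes are covered. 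Assembling these checks yields that every node in $H$ transmits successfully to all its $H$-neighbors in $O(1)$ rounds.
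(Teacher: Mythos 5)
Your argument is correct and is essentially the paper's own justification: the paper offers no explicit proof of this proposition, relying on the backbone-structure paragraph's assertion of precisely the fact you establish — that $H$ has a constant number of nodes per box of the pivotal grid, so a constant dilution parameter $d$ combined with per-box rank scheduling lets every node of $H$ transmit successfully (i.e., be heard by all communication-graph neighbours, including those in boxes at box-distance $2$) within a constant number of rank-and-residue slots. One repair is needed before splicing: the dilution corollary you cite (the one labelled cor:dilsuc) sits inside an unused macro definition in the source and never appears in the compiled paper, so the appeal should instead be made to the backbone-structure paragraph's stated property, or the dilution bound should be reproved directly, which your slot-counting argument effectively does.
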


\alglanguage{pseudocode}
\setlength{\textfloatsep}{0pt}
\begin{algorithm*}[t!]
\small
\caption{Central-Gran-Independent-Multicast(Vertex $v$, Graph $G(V,E)$)}
\begin{algorithmic}[1]
	\If{$v \in K_C$}
		\State execute \emph{Gran-Independent-Collect-Info} and \emph{Gather-Message}
	\EndIf
	\If {$v$ wakes up}
		\State Use \emph{Compute-Backbone} to find $H$
		\State execute \emph{Push-Messages} $D+2k$ times.
		\Comment {\emph{All nodes in first $H$ and then in $G$ receive the $k$ messages}}
	\EndIf
\Statex
\end{algorithmic}
\end{algorithm*}

\begin{corollary}
Protocol \emph{Central-Gran-Independent-Multicast} constructs a \emph{backbone structure} and accomplishes multi-Broadcast in the non-spontaneous wake-up setting in $O(D + k \log \Delta)$ .
\end{corollary}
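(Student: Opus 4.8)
The plan is to prove the corollary by composition: Protocol \emph{Central-Gran-Independent-Multicast} is the sequential concatenation of four subroutines, each already analyzed by the preceding propositions, so it suffices to (i) bound the round complexity of each phase, (ii) verify that the output of each phase is a valid input to the next, and (iii) sum the bounds. Since we are in the centralized setting, every node knows the full topology, so the backbone $H$ produced by \emph{Compute-Backbone} is available as local knowledge to every node the moment it wakes up; thus no communication is spent constructing $H$, and correctness of the construction follows directly from the proposition on \emph{Compute-Backbone}. A global round counter is maintained by piggybacking it on transmitted messages (as noted in the Technical Preliminaries), so all awake nodes agree on the current phase.

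First I would account for the two source-side phases, which run only among the awake source nodes $K_C$ of each box. By the proposition on \emph{Gran-Independent-Collect-Info}, $k$ repetitions of a $(\Delta,c)$-SSF, spatially diluted so that only nodes sharing a box can silence one another, elect a single leader $l(T_C)$ per box and build the min-heap message tree $T_C$ in $O(k\lg\Delta)$ rounds; Proposition~\ref{prop:ssf} guarantees that in each step the closest surviving pair in a box exchanges messages successfully despite concurrent transmissions, which is what drives the heap down to one survivor. Next, the proposition on \emph{Gather-Message} shows that the round-robin BFS over $T_C$ lets $l(C)$ collect all (at most $k$) rumors held in its box in $O(k)$ rounds.

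Then I would handle dissemination over the backbone. Each iteration of \emph{Push-Messages} is, by its proposition, a round in which every node of $H$ transmits its first not-yet-sent rumor successfully to all backbone neighbors; this relies on $H$ having $O(1)$ nodes per box so that $O(1)$-dilution makes all such transmissions collision-free \cite{JKS-ICALP-13}. The crux of the argument, and the step I expect to be the main obstacle, is the pipelining bound: I must show that $D+2k$ iterations suffice for each of the $k$ rumors to reach every node of $H$. A standard pipelining argument does this: tracking the backbone distance of a node from the originating box of a given rumor together with that rumor's rank in the queue it must traverse, each iteration advances every rumor by one hop along a shortest backbone path while introducing at most a unit of queueing delay, so a rumor originating at distance $d$ reaches a node within $d + O(k)$ iterations, and since the diameter of $H$ is $O(D)$ this is $O(D+k)$. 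Care must be taken here because the non-spontaneous wake-up means the set of active backbone nodes expands together with the dissemination front; however, a node wakes up exactly when the frontier reaches it, at which point it already knows its role in $H$, so the wake-up wave coincides with the pipelining front and introduces no extra delay.

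Finally, after all rumors are present at every node of $H$, one more repetition of the push phase propagates them from the dominating set $H$ to the remaining nodes of $G$ in $O(k)$ further rounds, since $H$ is a dominating set and each box leader can flush its box with $O(1)$-diluted transmissions. Summing the phases gives $O(k\lg\Delta) + O(k) + O(D+k) + O(k) = O(D + k\lg\Delta)$, which establishes both the construction of the backbone structure and the claimed round complexity for multi-broadcast, completing the proof.
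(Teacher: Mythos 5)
Your proposal is correct and follows essentially the same route as the paper, which establishes this corollary by composing the preceding propositions exactly as you do: $O(k\lg\Delta)$ for the diluted SSF-based leader election in each box, $O(k)$ for gathering rumors at the box leader via the message tree, $O(D+k)$ for pipelined dissemination over the precomputed backbone $H$ via $D+2k$ iterations of \emph{Push-Messages}, and a final $O(k)$ push from $H$ to all of $G$, summing to $O(D+k\lg\Delta)$. Your write-up is in fact somewhat more explicit than the paper's (notably the pipelining invariant and the observation that the wake-up wave coincides with the dissemination front), but the decomposition and the key facts invoked are the same.
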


\subsection{Granularity dependent algorithm}
\label{broad_cent_gran}

	The main difference in this algorithm is in the first stage in which the leader $l(K_C)$ is elected and corresponding tree is prepared. The rest of the algorithm is same. We briefly describe an alternate procedure for leader election in $K_C$, which takes $O(\lg g)$ rounds.

  Let $x$ be the smallest distance between two nodes in the network. Then, in $G_x$ there is at most one node in each box of grid $G_x$. Now suppose at an inductive stage grid $G_y$ has the property that each box of the grid has at most one active node, who is the leader in that box. Then, all these leaders transmit their messages in an appropriately (constant) diluted schedule and of the at most four leaders in $G_{2y}$ the one with the least label is chosen the leader, who alone remains active. If we continue this process till the pivotal grid $G_{\gamma}$ is reached, we have ensured that there is at most one single leader $l(K_C)$ left in box $C$ of the pivotal grid. This process is 
further diluted, so that when nodes in one box of the pivotal grid are transmitting, the nodes in nearby 20 boxes are not.

  The \emph{message tree} $T$, as in the previous section, is used to coordinate by the leader $l(K_C)$ to coordinate in $O(k)$ rounds, in which each node of $K_C$ gets a separate round for itself to transmit its message. These messages are gathered by the leader of the box in backbone structure $H$. Finally, pipelined transmission of the gathered messages happen over the backbone structure in $O(D)$ rounds. Respective leaders of the boxes disperse these collected messages to the nodes in their box in another $O(k)$ rounds.

\alglanguage{pseudocode}
\setlength{\textfloatsep}{0pt}
\begin{algorithm*}[t!]
\small
\caption{Gran-Dep-Collect-Info(Vertex $v$, Graph $G(V,E)$)}
\begin{algorithmic}[1]
	\State $h \gets \min_{i \in \mathbb{N}}(2^i|2^i \geq g)$ and $y \gets r/h$
	\State $state(v)$ is active initially for all nodes in $K_C$, and all other nodes remain inactive.
	\For{$i = 1,2,\dots,\log h$}
		\State Each active node in $G_{2y}$ transmits sequentially encoding its label.
		\Comment {\emph{There are only four such nodes}}
        \State $state(v) \gets$ inactive \textbf{if} $v$ hears $m_u,  u \in G_{2y}, \text{and } u < v$ \textbf{else} $v$ stores $m_u$
	\EndFor
\Statex
\end{algorithmic}
\end{algorithm*}

\begin{proposition}
  Algorithm \emph{Gran-Dep-Collect-Info}, $\delta$-diluted with respect to pivotal for $\delta = 5$, elects $l(T_C)$ for each subset $K_C$ in $O(\lg g)$ rounds.
\end{proposition}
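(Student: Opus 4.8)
The plan is to induct on a dyadic family of grids that doubles up to the pivotal grid, to bound the number of levels by the granularity, and to charge each level to $O(1)$ SINR-successful rounds. Re-indexing the algorithm's grids, I take $G_{\gamma/2^{j}}$ for $j=J,J-1,\dots,0$, with $2^{J}$ the least power of two exceeding $g$, so $J=O(\lg g)$; these grids are dyadic refinements of $G_{\gamma}$, hence nested, so every fine box lies inside a single pivotal box and all merges respect pivotal-box boundaries. Since $g=r/x_{\min}$ with $x_{\min}$ the minimum interstation distance, the finest box $G_{\gamma/2^{J}}$ has diagonal $r/2^{J}<x_{\min}$ and therefore contains at most one station, so every source of $K_C$ begins as the unique active leader of its finest box. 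I maintain the \emph{invariant}: before level $i$, each box of $G_{\gamma/2^{J-i+1}}$ contains at most one active node, namely the minimum-label active source in that box, and every already-deactivated source has recorded the leader that silenced it.

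For the step I merge four boxes of $G_{\gamma/2^{J-i+1}}$ into one box of $G_{\gamma/2^{J-i}}$; by the invariant at most four active leaders lie in each merged box (the ``only four such nodes'' of line~4), and the minimum-label one survives while the other three deactivate and record it, which re-establishes the invariant one level up and appends three edges to the tree $T_C$. The geometric fact that makes this local election realizable is that any two of the four leaders lie in a square of side $\gamma/2^{J-i}$, hence at Euclidean distance at most $(\gamma/2^{J-i})\sqrt2=2r/c$, where $c=2^{J-i+1}\in\{2,4,\dots\}$ is an integer $\ge 2$.

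To certify reception I would invoke Proposition~\ref{prop:lead1} (equivalently Corollary~\ref{cor:dilsuc}) with $G_{\gamma/c}$ in the role of $G_x$: indexing the transmissions by the fine-box coordinates modulo a constant $\delta$ (the statement's $\delta=5$) makes the simultaneous transmitters $\delta$-diluted with respect to $G_{\gamma/c}$ with one per box, while still placing the at most four co-box leaders in distinct sub-rounds, so each transmitter is $\tfrac{2r}{c}$-successful; since the co-box distance is exactly $2r/c$, all four leaders hear the survivor. Thus one level costs $O(\delta^2)=O(1)$ rounds, so the $J$ levels cost $O(\lg g)$ rounds in total. After the last level ($c=2$, $G_{\gamma/2}\to G_{\gamma}$) every pivotal box $C$ retains a single active node, which by the invariant is the global minimum-label source $l(T_C)$ and the root of the recorded tree $T_C$.

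I expect the SINR certification to be the main obstacle, for two reasons. First, the bound $2r/c$ on the co-box distance meets the $\tfrac{2r}{c}$-success radius of Proposition~\ref{prop:lead1} with no slack, so I must confirm that both the interference inequality and the strong-signal condition $P_v\dist^{-\alpha}(v,u)\ge(1+\eps)\beta\cN$ hold at this extreme distance uniformly over all $O(\lg g)$ levels. Second, the dilution in Proposition~\ref{prop:lead1} is measured against the current grid $G_{\gamma/c}$, not the pivotal grid, so I would verify that the stated $\delta=5$, read as a constant $\ge d_{\alpha}$ applied at each scale through the modular round-robin, really yields the dilution the proposition requires at every level; the coarse-to-fine alignment of the grids and the exact value of the constant are the delicate points. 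The remaining ingredients, namely the base-case one-node-per-box claim and the fact that the recorded pointers form a single tree spanning $K_C\cap C$, are routine once the invariant holds.
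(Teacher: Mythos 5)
Your argument is, in substance, the paper's own: the paper states this proposition without proof, and its justification is only the prose sketch at the start of Section~\ref{broad_cent_gran} (start from a grid fine enough that granularity forces at most one station per box, merge four boxes at a time keeping the minimum-label leader, certify each level by a constant-diluted schedule, stop at the pivotal grid), which is exactly what you formalize with your invariant. The places where you depart from the paper's literal text are repairs, and both are correct and necessary. First, your re-indexing to the $\gamma$-based dyadic family $G_{\gamma/2^j}$ is not cosmetic: the algorithm's own grids $G_{r/h}, G_{2r/h},\dots$ are neither nested in the pivotal grid $G_{\gamma}=G_{r/\sqrt{2}}$ nor fine enough at the base level (side $r/h\le x_{\min}$ gives a diagonal up to $\sqrt{2}\,x_{\min}$, which can hold two stations), so the base case and the claim that merges respect pivotal-box boundaries both fail by a factor of $\sqrt{2}$ unless one re-indexes as you did. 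Second, your reading of the dilution is the right one: $\delta$-dilution with respect to the pivotal grid alone, as the statement literally says, cannot certify reception, because inside a single transmitting pivotal box two active leaders in adjacent fine boxes could transmit simultaneously while arbitrarily close to each other's intended receivers; the interference bound needs dilution at the current scale, which your fine-box-coordinates-mod-$\delta$ schedule provides (the paper's prose hints at this with its ``appropriately (constant) diluted schedule'' on top of the pivotal dilution). Your two flagged worries dissolve on inspection: co-box distances are strictly below the $\tfrac{2r}{c}$ success radius because boxes are half-open, and the strong-signal condition holds since $2r/c\le r=(1+\eps)^{-1/\alpha}$ for $c\ge 2$. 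The one genuine caveat is the one you already identified: $\delta$ must be at least the model-dependent constant $d_{\alpha}$ of the supporting dilution proposition, so the stated $\delta=5$ can only be read as a stand-in for that constant. Note also that the proposition and corollary you invoke appear only inside an unexpanded macro in the source, so they are not part of the compiled paper; the dilution analysis should instead be cited from \cite{JKS-ICALP-13}, where it originates.
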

\begin{corollary}
  Protocol \emph{Central-Gran-Dependent-Multicast} which replaces \emph{Gran-Independent-Collect-Info} with \emph{Gran-Dep-Collect-Info} in \emph{Central-Gran-Independent-Multicast} constructs a \emph{backbone structure} and accomplishes multi-Broadcast in the non-spontaneous wake-up setting in $O(D + k + \lg g)$ rounds.
\end{corollary}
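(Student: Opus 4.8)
The plan is to observe that \emph{Central-Gran-Dependent-Multicast} is structurally identical to \emph{Central-Gran-Independent-Multicast}, the only change being that the leader-election and information-collection phase \emph{Gran-Independent-Collect-Info} is swapped for \emph{Gran-Dep-Collect-Info}. Consequently I would establish the claim phase-by-phase, reusing the propositions already proved for the shared phases and re-deriving only the cost of the modified phase. The execution decomposes into four stages: (i) per-box leader election and message-tree construction, (ii) gathering of all rumors at each box leader $l(C)$, (iii) pipelined dissemination along the precomputed backbone $H$, and (iv) final dispersion from each $l(C)$ to the remaining (sleeping) nodes of its box.

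For correctness I would argue that after stage (i) every box $C$ of the pivotal grid $G_{\gamma}$ retains exactly one active source $l(K_C)$ together with a message tree $T$ spanning all of $K_C$: this is precisely the guarantee of the preceding proposition on \emph{Gran-Dep-Collect-Info}, whose $\delta$-dilution with $\delta=5$ over the pivotal grid ensures that the sequential transmissions of the at most four competing leaders in $G_{2y}$ at each doubling level are mutually non-interfering and heard inside the box, by the density/SINR bound of Proposition~\ref{prop:ssf} applied at the relevant grid scale. Stages (ii)--(iv) are verbatim those of the granularity-independent protocol, so their correctness is inherited unchanged: \emph{Gather-Message} drains $T$ to collect every rumor at $l(C)$, \emph{Push-Messages} pushes one fresh rumor per backbone node per constant block of rounds, and the final repetitions of \emph{Push-Messages} within each box deliver the $k$ rumors to all remaining nodes of $G$.

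For the round complexity I would account for the stages as follows. Stage (i) costs $O(\lg g)$ rounds, since \emph{Gran-Dep-Collect-Info} runs over the $\log h = O(\lg g)$ doubling levels from the finest grid $G_x$ up to $G_{\gamma}$, where $h=\min\{2^i : 2^i \ge g\}$, with only a constant number of sequential transmissions and a constant dilution factor at each level; stage (ii) costs $O(k)$ rounds; stage (iii) costs $O(D+k)$ rounds by pipelining $k$ unit-size messages over a backbone of diameter $O(D)$ with constant per-hop transmission cost; and stage (iv) costs a further $O(k)$ rounds. Summing gives $O(\lg g) + O(k) + O(D+k) + O(k) = O(D + k + \lg g)$, as claimed.

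The hard part will be making the bookkeeping of stage (i) airtight, though most of it is already encapsulated in the cited proposition: one must verify that advancing from $G_y$ to $G_{2y}$ genuinely preserves the invariant ``at most one active node per box,'' that the $\delta=5$ dilution simultaneously suppresses interference from the twenty neighboring pivotal boxes while still permitting intra-box reception at every scale $y \le \gamma$, and that only same-box silencing occurs so that each resulting message tree $T$ stays confined to a single pivotal box. Everything else reduces to invoking the already-established propositions for \emph{Compute-Backbone}, \emph{Gather-Message}, and \emph{Push-Messages} and adding the four bounds.
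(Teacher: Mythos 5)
Your proposal is correct and follows essentially the same route as the paper, which treats this corollary as immediate from the phase decomposition: the $O(\lg g)$-round doubling-grid leader election of \emph{Gran-Dep-Collect-Info} (from $G_x$ up to the pivotal grid with constant dilution), followed by the unchanged $O(k)$ gathering, $O(D+k)$ pipelined backbone dissemination, and $O(k)$ final in-box dispersion. Your accounting and invariant (at most one active node per box at each doubling level, with dilution suppressing inter-box interference) match the paper's own argument, so there is nothing further to add.
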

\section{Networks with knowledge of \\neighbors and their coordinates}
\label{broad_loc_spont}
  
\dk{In the setting where nodes have knowledge about their and neighbors' coordinates,
we present an algorithm that works as follows.}
First all the active nodes $K_C$ execute a selective family, where-in every node which receives some message shuts itself off for the time. This results in each box of the pivotal grid having at most one active node that belongs to $K_C$. Now, Algorithm Gen-Inter-Box-Broadcast from \cite{JKS-ICALP-13} is executed $D$ times. It is unaffected by the fact that there are multiple boxes in pivotal grid in which there are active nodes at the start of the algorithm. This results in the following: (a) All nodes are awake in the network. (b) There is a local leader elected in each box of the pivotal grid. (c) There are directional senders that are elected for communicating in each connected direction in 
\dk{$DIR$.}

	From this stage we prepare the remaining communication infrastructure for the graph as follows: Each of the directional senders chooses one of the stations from its set of neighbors, that belongs to the corresponding box in that direction and declares it the directional receiver for that box. Thus, if a message is to be sent from one box of the grid to another neighboring box, it is accomplished as follows. The local leader first transmits the message, followed by the transmission of the message by the directional sender. It is then received by the directional receiver who then transmits it once and is received by the local leader of that box.

  This communication infrastructure is used as follows: The local leader collects all the $\leq k$ source-messages from its box in $k \lg \Delta$ time and puts them in a stack. Then, the above communication procedure for forwarding the message from one box to the adjacent box is executed for each direction. Thus, all adjacent boxes receive the message. This process can be executed concurrently in case there are other boxes carrying some source message with the help of appropriate dilution. If a message that has been transmitted by local leader to its adjacent boxes, is received again from another direction, then it is ignored. If there are multiple messages that arrive at the local leader in a sequence of rounds, then they are all stored in a  stack along with previously unsent messages. When the local leader finds a free set of rounds that were meant for it to transmit messages to neighboring boxes, then it pops a new message from the stack and executes the above transmission procedure.

  If this procedure is executed $D+k$ times, then it is easy to see that all the $k$ source-messages reach all remaining nodes in the network.

\begin{proposition}
\label{prop_elec}
  Algorithm \emph{Gen-Inter-Box-Broadcast} (\cite{JKS-ICALP-13}) works in time $O(\lg ^2 n)$ and selects a leader from a set of nodes with local knowledge.
\end{proposition}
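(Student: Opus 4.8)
The plan is to prove the two halves of Proposition~\ref{prop_elec} — that exactly one leader is chosen per box, and that this takes $O(\lg^2 n)$ rounds — by first reducing the geometry to a single box and then using the local knowledge to make the election itself essentially free of contention. Working in the pivotal grid $G_\gamma$, any two stations sharing a box are mutually in range, so the election is a single-hop problem inside each box; the only coupling between boxes comes from the at most $20$ boxes lying in the directions $DIR$. I would run the whole procedure on a $\delta$-diluted schedule with $\delta$ a constant depending only on $\alpha$ and $\eps$, so that no two mutually interfering boxes ever transmit in the same slot. This turns the global task into $O(1)$ simultaneous copies of a single-box election and inflates the round count by only the factor $\delta^2=O(1)$. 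The key structural observation, and the reason neighbour-coordinate knowledge is stronger than own-coordinate knowledge (which yields only the $O(n\lg n)$ bound), is that a node already knows the coordinates of every box-mate a priori; hence the instant a node is woken it can compute the canonical leader of its box — say the awake box-mate of smallest label, or the one nearest the box centre — with no further communication.

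With the leader's identity fixed locally, the residual work is to deliver the rumour reliably across each box boundary and to elect, in each of the $\le 20$ directions, a directional sender carrying it into the neighbouring box; this is where the two logarithmic factors arise. Following \cite{JKS-ICALP-13}, I would resolve the contention among candidate senders with a strongly-selective family: by Proposition~\ref{prop:ssf}, a $\delta$-diluted $(N,c)$-SSF of length $O(\lg N)=O(\lg n)$ (recall $N$ is polynomial in $n$) already lets the closest contending pair communicate, so at least one candidate is isolated and heard by all box-mates in a single pass. Iterating this primitive over $O(\lg n)$ refinement phases — halving the surviving candidate set, or equivalently pinning the chosen sender's label one bit at a time — drives each direction to a unique sender and confirms the wake-up of the neighbouring box. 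With $O(\lg n)$ phases of $O(\lg n)$ rounds each, the bound is $O(\lg^2 n)$, matching the statement.

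The step I expect to be the main obstacle is the SINR interference analysis underpinning every ``heard by all intended recipients'' claim: one must certify that a transmitter singled out by the selective family clears the reception threshold at each target station despite the concurrent transmissions of all other diluted boxes across the plane, and that the weak-device condition $P_v\dist^{-\alpha}(v,u)\ge(1+\eps)\beta\cN$ also holds there. This is exactly what the dilution guarantee behind Proposition~\ref{prop:ssf} delivers, once $\delta$ is taken large enough as a function of $\alpha$ and $\eps$. The remaining checks are routine: that $\lg N=O(\lg n)$, that each phase makes the claimed constant progress, and that the canonical leader and the directional senders are computed identically by all box-mates from their shared coordinate information. With these in place, both correctness and the $O(\lg^2 n)$ round bound follow.
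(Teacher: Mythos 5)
First, a point of comparison you could not have known: the paper contains no proof of this proposition at all. It is a black-box import --- algorithm \emph{Gen-Inter-Box-Broadcast} and its $O(\lg^2 n)$ analysis belong to the cited work \cite{JKS-ICALP-13}, and the proposition merely restates that result so it can be invoked later. Your attempt is therefore a from-scratch reconstruction of the external argument, and as such it has a genuine gap.

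The gap is your ``key structural observation,'' namely that a node knows the coordinates of every box-mate a priori and can therefore compute the canonical leader (e.g., ``the awake box-mate of smallest label'') with no further communication. This fails for exactly the candidate sets to which the proposition is applied. In \emph{Local-Leader-Elect} the election is over $S_C^{(i,j)}$, the nodes of box $C$ having a neighbor in box $C(i,j)$: a node knows whether \emph{it} belongs to this set (it sees its own neighbors' coordinates), but not which of its box-mates do, since that would require knowledge of neighbors' neighbors. In \emph{Local-Multicast} the election is over $S_C$, the set of box members that have been woken up, and wake-up status is dynamic --- it is not part of coordinate knowledge at all. In both cases each node knows only its \emph{own} membership in the candidate set; this is precisely why the election genuinely requires communication and costs $\Theta(\lg^2 n)$ rounds rather than $O(1)$. (Note the paper itself makes this distinction: the full-box leader $l(C)=\min\{u : u\in C\}$, which \emph{is} locally computable since box-mates are mutual neighbors, is assigned directly without invoking \emph{Gen-Inter-Box-Broadcast}; the algorithm is reserved for the directional-sender and wake-up sets.) Your round-complexity argument is also unsupported: Proposition~\ref{prop:ssf} guarantees only that the \emph{closest} contending pair exchanges messages during one $O(\lg n)$-round SSF pass, i.e., one elimination per pass, which naively yields $O(\Delta\lg n)$, not $O(\lg^2 n)$; the claimed ``halving of the surviving candidate set'' is asserted rather than proved, and the alternative ``pinning the leader's label one bit at a time'' requires distinguishing a silent round from a collision round, i.e., collision detection, which the model explicitly excludes (no carrier sensing). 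So neither the correctness half nor the $O(\lg^2 n)$ half of your argument goes through as written; repairing it would essentially mean reproducing the elimination analysis of \cite{JKS-ICALP-13}.
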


  We assume that every node has knowledge of the labels as well as positions of all of its neighbors. Note that the algorithm described in the previous section requires complete knowledge of the topology for only building the communication backbone (a connected dominating set) $H$, after which local knowledge is sufficient i.e. $v$ needs to have knowledge of nodes $u \in G, u \in C$ and $u \in H, u \in C$, where $C = \text{box}(v)$.

  After all nodes in a box have been woken up, we create a backbone using Proposition \ref{prop_elec}. Thus, on each box being activated, we elect leader in the box and compute the nodes which belong to the backbone, and wake-up all the nodes in the neighboring boxes. Thus, in $O(D)$ repetitions, the backbone structure of the entire network is computed. This is followed by gathering and dissemination of the $k$ messages - first in the backbone and then to the entire network in $O(D + k \log n)$ rounds.

\subsection{Connected Dominating Set}
 The node with the least label in each box is considered to be the leader of the box. For each $(i,j) \in DDL$, let the set of nodes in box $C$ which can have neighbors in box $C(i,j)$ be $S_C^{(i,j)}$. We have to elect a leader from each of these $S_C^{(i,j)}$, denoted by $s_C^{(i,j)} = l(S_C^{(i,j)})$ which is the helper node to send a message to $C(i,j)$. This is followed by a round where $s_C^{(i,j)}$ transmits to inform all neighbors that it is the designated sender. Similarly we mark a node to be the designated receiver from $C(i,j)$. Let $R_C^{(i,j)}$ be the set of nodes which are connected to $s_C^{(i,j)}$. We elect a leader out of them, denoted by $r_C^{(i,j)} = l(R_C^{(i,j)})$ and notify its neighbors.
\alglanguage{pseudocode}
\setlength{\textfloatsep}{0pt}
\begin{algorithm*}[t!]
\small
\caption{Local-Leader-Elect(Vertex $v$)}
\label{algo:back}
\begin{algorithmic}[1]
   	\State $C \gets \text{box}(v)$
	\State $l(C) \gets \min\{u, u\in C\}$
	\Comment{\emph{Leader of C}}
	\ForAll {$(i,j) \in DDL$}
		\State $s_C^{(i,j)} \gets \text{\emph{Gen-Inter-Box-Broadcast}} \{\text{all nodes with neighbours in } C(i,j) \in C\}$
		\Comment {\emph{Designated sender to $C(i,j)$ in $C$}}
		\State $u \gets s_C^{(i,j)}$, transmits $m_u$ encoding $u$
		\Comment {\emph{The sender notifies its neighbours}}
		\State $r_C(i,j)^{(-i,-j)} \gets \min\{\text{all neighbours of } u \in C(i,j)\}$
		\Comment {\emph{Designated receiver from $C$ in $C(i,j)$}}
		\State $u$ transmits $m_r$ encoding $r_C(i,j)^{(-i,-j)}$, waking it up
		\Comment {\emph{The sender notifies the receiver}}
	\EndFor
\Statex
\end{algorithmic}
\end{algorithm*}

\begin{proposition}
  Protocol \emph{Local-Leader-Elect} has round complexity of $O(\lg ^2 n)$ rounds and elects  leader and helper nodes (for communicating with adjacent boxes) in each box, given that all nodes are aware of their neighborhood.
\end{proposition}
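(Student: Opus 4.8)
The plan is to establish the two assertions—correctness of the elected objects and the $O(\lg^2 n)$ round bound—separately, treating one direction at a time and then accounting for the fact that all boxes run the phases concurrently.

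First I would dispose of the leader $l(C)$. Because the pivotal grid has side $\gamma=r/\sqrt2$, the diagonal of every box equals $r$, so any two stations sharing a box are mutual neighbours in the communication graph (the defining property of the pivotal grid). Consequently each node of $C$, from its neighbourhood knowledge alone, already sees the full list of stations in $C$ together with their labels; every such node can therefore compute $l(C)=\min\{u:u\in C\}$ locally and consistently, with no transmission required.

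Next come the directional helpers. For a fixed $(i,j)$, a station $v\in C$ can decide membership in $S_C^{(i,j)}$ purely from local data, since it knows the coordinates of each of its neighbours and can test whether one of them falls in box $C(i,j)$. Electing $s_C^{(i,j)}=l(S_C^{(i,j)})$ is then precisely an instance of leader election among stations with local knowledge, which is what \emph{Gen-Inter-Box-Broadcast} solves: by Proposition~\ref{prop_elec} it terminates in $O(\lg^2 n)$ rounds and returns a single winner. The subsequent line lets $s_C^{(i,j)}$ announce itself to its neighbours; and since $u=s_C^{(i,j)}$ knows the labels and boxes of all its neighbours, the receiver $\min\{\text{neighbours of }u\text{ in }C(i,j)\}$ is computed at $u$ with no further election, after which a single transmission wakes it. Thus each direction yields a well-defined sender/receiver pair, which are exactly the helper nodes the statement claims.

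For the round bound, the direction set $DIR$ (the set $DDL$ ranged over by the protocol's loop) has $|DIR|\le 20$, a constant, so the protocol runs a constant number of direction-phases, each costing one call to \emph{Gen-Inter-Box-Broadcast} ($O(\lg^2 n)$ by Proposition~\ref{prop_elec}) plus $O(1)$ rounds for the two announcements; the total is $O(\lg^2 n)$. The main obstacle is showing that these single-transmission announcements genuinely succeed while every box of the grid performs the same phase at once. I would resolve this by diluting each announcement: within a fixed direction $(i,j)$ each box contributes at most one transmitter, so the transmitting set holds at most one station per box and can be $\delta$-diluted with a constant $\delta$ chosen as in Corollary~\ref{cor:dilsuc}; this forces each announcement to reach all intended neighbours while inflating the cost only by the constant factor $\delta^2$, which is absorbed into the $O(1)$ term. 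That \emph{Gen-Inter-Box-Broadcast} retains its guarantee in the presence of many simultaneously active boxes is, as the surrounding text notes, inherited from the analysis of the corresponding subroutine in the prior work, and is taken as given through Proposition~\ref{prop_elec}.
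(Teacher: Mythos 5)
Your proposal is correct and follows essentially the route the paper intends: the paper states this proposition without an explicit proof, and your argument supplies exactly the reasoning implicit in the protocol description --- $l(C)$ is computable locally because every box of the pivotal grid has diagonal $r$ (so co-box stations are mutual neighbors and visible via neighborhood knowledge), each directional sender is one invocation of \emph{Gen-Inter-Box-Broadcast} costing $O(\lg^2 n)$ rounds by the preceding proposition, the directional receiver is then computed locally by the sender and notified in $O(1)$ announcements, and the loop ranges over the constant-size direction set ($|DIR|\le 20$, with $DDL$ being the protocol's name for it). One caveat: the dilution corollary you invoke does not actually appear in the compiled paper (it sits inside an unexpanded macro in the source), so the fact you need --- that a transmitting set with at most one station per box, constant-diluted, succeeds --- should instead be attributed to the backbone-structure remark in the Technical Preliminaries or directly to the corresponding statement in the cited prior work.
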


\alglanguage{pseudocode}
\setlength{\textfloatsep}{0pt}
\begin{algorithm*}[t!]
\small
\caption{Local-Multicast(Vertex $v$)}
\label{prot:diss}
\begin{algorithmic}[1]
	\If{nodes $S_C$ of a box wake up}
		\State execute \emph{Gen-Inter-Box-Broadcast} to elect leader of $S_C$ ($l(S_C)$)
		\State $l(S_C)$ transmits, waking up all nodes in $C$
		\State execute \emph{Local-Leader-Elect}
	\EndIf
	\If{backbone ($H$) has been constructed}
		\State execute \emph{Gather-Message}
        \State execute \emph{Push-Messages} $D+2k$ times.
		\Comment {\emph{All nodes in first $H$ and then in $G$ receive the $k$ messages}}
	\EndIf
\Statex
\end{algorithmic}
\end{algorithm*}

\begin{corollary}
Protocol \emph{Local-Multicast} constructs a \emph{backbone structure} and accomplishes multi-cast in the non-spontaneous wake-up setting in $O(D \lg^2 n + k \log \Delta)$ rounds, given that each node is aware of its neighborhood.
\end{corollary}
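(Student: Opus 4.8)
The plan is to bound the running time of \emph{Local-Multicast} by charging it to three separable phases and summing their costs: (i) distributed construction of the backbone $H$, (ii) intra-box gathering of the source rumours at each box leader, and (iii) pipelined dissemination of the gathered rumours across $H$ and finally into every box. Correctness then follows by showing that at the end of phase (iii) every node holds all $k$ rumours, while the round complexity is the sum $O(D\lg^2 n) + O(k\lg\Delta) + O(D+k)$, which collapses to $O(D\lg^2 n + k\lg\Delta)$.

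For phase (i) I would argue by induction on the graph distance (hop count) from the source set $K$. Under non-spontaneous wake-up a box $C$ becomes active only after one of its neighbouring boxes has reached it, so I claim that after $t$ iterations of the outer activation loop every box within graph distance $t$ of $K$ has elected its leader $l(C)$ together with its directional senders and receivers. The inductive step is precisely one box activation: once the nodes $S_C$ that first hear a transmission wake up, \emph{Gen-Inter-Box-Broadcast} elects $l(S_C)$ in $O(\lg^2 n)$ rounds by Proposition~\ref{prop_elec}, the elected node wakes the rest of $C$, and \emph{Local-Leader-Elect} elects the leader and all directional helper nodes in another $O(\lg^2 n)$ rounds. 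Because concurrent elections in distinct boxes are separated by the dilution of the pivotal grid, the transmissions of one box do not corrupt the selective-family execution of a nearby box, so the per-layer cost does not blow up; after $O(D)$ layers the whole of $H$ is built at total cost $O(D\lg^2 n)$.

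For phases (ii) and (iii) I would invoke the already-established sub-results. Within each box the leader collects its (at most $k$) source rumours; using a strongly selective family repeated over the box, exactly as in the granularity-independent case, this costs $O(k\lg\Delta)$ rounds and produces the message tree that \emph{Gather-Message} then traverses in $O(k)$ further rounds. For dissemination, each invocation of \emph{Push-Messages} lets every backbone node deliver one previously unsent rumour to all of its backbone neighbours in $O(1)$ rounds; a standard pipelining argument shows that the $D+2k$ invocations suffice for all $k$ rumours to reach every node of $H$ (distance $O(D)$, plus $O(k)$ for the pipeline to fill and drain) and, since the final block of iterations repeats the dissemination one hop further, for all $k$ rumours to reach every node of $G$. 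Summing the three phases and absorbing the lower-order $O(D+k)$ terms yields the claimed bound $O(D\lg^2 n + k\lg\Delta)$.

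The main obstacle I anticipate is not the arithmetic but the clean composition of the modular pieces under non-spontaneous wake-up and spatial dilution. Specifically, I must verify that (a) leader elections proceeding simultaneously in different, possibly newly-woken boxes do not interfere --- this is where dilution of the pivotal grid together with the local-knowledge assumption is essential, since each box's activation, selective-family, and helper-selection steps must be confined to its own diluted time-slots --- and (b) the pipeline in \emph{Push-Messages} never stalls, i.e.\ no backbone node is left idle while awaiting a rumour it has not yet received, so that the $O(D+k)$ pipelining bound genuinely holds. Establishing (b) rigorously requires careful bookkeeping of the received-message lists maintained at each backbone node to show that some new rumour makes progress along every shortest path in each round.
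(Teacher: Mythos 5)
Your proposal is correct and follows essentially the same route as the paper: the paper treats this corollary as a direct composition of its stated propositions --- layer-by-layer wake-up and backbone construction via \emph{Gen-Inter-Box-Broadcast} and \emph{Local-Leader-Elect} ($O(D\lg^2 n)$), intra-box gathering of rumours via the selective-family/\emph{Gather-Message} machinery ($O(k\lg\Delta)$), and $D+2k$ pipelined invocations of \emph{Push-Messages} ($O(D+k)$) --- which is exactly your three-phase decomposition and cost accounting. Your added attention to non-interference of concurrent elections under dilution and to pipeline stalling makes explicit what the paper leaves implicit, but it does not change the argument.
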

\section{Nodes with Knowledge of own Coordinates only}
\label{s:only-own-coord}
  In this setting, initially each node knows only its own coordinates, label, $n$, $N$ and whether it belongs to subset $K_C$ or not.

  The protocol proceeds in three phases. In first phase, the number of active nodes in any box are reduced to at most $1$, using the selective family based method used in previous section, in $\S$ \ref{broad_cent_nongran}. In the second phase, the active nodes wake-up all the other nodes. Among the active nodes, a leader is elected who coordinates a round robin where each node in the box transmits in a separate round (done with appropriate dilution). Thus, all nodes learn their neighborhood. Using this extra knowledge, a backbone structure is easy to construct along the same lines as done in previous section. In the third phase, the source messages are gathered from the $k$-sources on some node of the backbone structure and then pipelined to reach the rest of the backbone structure. Finally, these messages are pushed to the network by the local leaders in each box.

  In the second phase, two threads are executed simultaneously using time multiplexing (as is done in \cite{JKS-ICALP-13}). In one thread, leader election is conducted by repeated execution of $(N,c)$-SSF for an appropriately sized constant $c$. A \textit{message tree} $T$ is maintained as in the previous sections by the leader. In the second thread, the current leader(s) in the box execute a round-robin like protocol in which they get nodes belonging to the \emph{message tree} in their box to transmit one by one, using a BFS like procedure. A round robin conducted in one box may be disrupted by round robins conducted in far off boxes in which multiple nodes are transmitting at the same time because they have not elected a single leader amongst them. However as shown in $\S 4$ of \cite{JKS-ICALP-13}, this process is successful in waking up every node and successfully conducting round-robin in every box in $O(n \lg N)$ rounds, because progress is always made on the strongly selective family thread which occurs in rounds dedicated for it.

  The second phase successfully completes in $O(n \lg N)$ rounds and has a property that every node in every box has successfully transmitted once without any interference from far off nodes. Thus, all nodes update their neighborhood by the end of the second phase. The local leaders merely demarcate their directional senders for communicating in each direction, who in turn announce their directional receivers for box in that direction. This sets up the communication infrastructure which can now be employed for the $k$-source broadcast in a pipelined manner as done in previous section.

\subsection{Leader Election}
  When all nodes of a box have been activated, first a leader is chosen, who coordinates the modified round-robin procedure. This election is similar to the manner in which leader is elected in \emph{Gran-Independent-Collect-Info} in $\S$\ref{broad_cent_nongran}. Recall that with the execution of a $\log N$ size selector $(N,c)$-SSF, the nodes which are closest to each other successfully communicate with each other. We do not know when exactly this thread will accomplish the leader election process. To overcome this, algorithm is executed for $O(n \lg N)$ rounds, which is an upper bound on its running time. All transmissions in the leader election are done only in odd rounds to ensure it does not affect other transmissions.

\alglanguage{pseudocode}
\setlength{\textfloatsep}{0pt}
\begin{algorithm*}[t!]
\small
\caption{Thread1(Vertex $v$, Time $t$)}
\begin{algorithmic}[1]
	\If{$state(v) = $ active}
		\State $(i,j) \gets$ box-coordinates of $\text{box}(v)$
		\State let $m_v$ encode $v$ and $(i \mod 10, j \mod 10)$
		\State $v$ transmits $m_v$ according to its label using $(N,c)$-SSF in rounds $t \mod 2 = 1$
		\If{$v$ hears $m_u, u\in \text{box}(v)$}
			\State $state(v) \gets$ inactive \textbf{if} $u < v$ \textbf{else} $G(v) \gets G(v) \cup \{m_u\}$
        \EndIf	
	\EndIf
\Statex
\end{algorithmic}
\end{algorithm*}

\begin{proposition}
Thread1 elects $l(T_C)$ in $O(|C| \log n)$ rounds, when each node knows its label and coordinates.
\end{proposition}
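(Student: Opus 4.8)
The plan is to establish a per-step progress guarantee together with an invariant that identifies the survivor. Fix $C=\text{box}(v)$ (each station can compute its box and the residues $(i\bmod 10,\,j\bmod 10)$ from its coordinates, which is all Thread1 uses beyond the label), and call one complete sweep of the $(N,c)$-SSF a \emph{step}. Since $c$ is a fixed constant and $N$ is polynomial in $n$, a step spans $O(c^2\lg N)=O(\lg n)$ rounds, and restricting transmissions to the odd rounds (together with the coordinate-based dilution employed by the protocol) inflates this only by a constant factor.

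First I would prove that every step deactivates at least one active station of $C$. Let $A\subseteq C$ denote the stations still active at the start of a step, and let $\{u,w\}\subseteq A$ attain the minimum in-box distance $x=\min_{a,b\in A,\,a\neq b}\dist(a,b)$. By Proposition~\ref{prop:ssf} (the closest-pair property of the $(N,c)$-SSF under the required dilution, following \cite{JKS-ICALP-13}), there is a round within the step in which $u$ transmits, $w$ is silent, the signal of $u$ at $w$ dominates the aggregate interference, and hence $w$ successfully hears $m_u$; symmetrically $u$ hears $m_w$. As all stations of a single pivotal box are mutual neighbors, the update rule of Thread1 forces the larger-labelled endpoint of $\{u,w\}$ to become inactive. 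Thus $|A|$ strictly decreases at every step.

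Next I would pin down the survivor. The station $\ell$ of smallest label in $C$ becomes inactive only upon hearing some $m_u$ with $u\in\text{box}(\ell)$ and $u<\ell$, which never occurs; therefore $\ell$ stays active throughout and is the unique station remaining once the active set has size one, so the elected leader is $\ell=l(T_C)$. Combining this with the per-step progress bound, at most $|C|-1$ steps suffice, and the total round complexity is $O(|C|\lg n)$.

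The main obstacle is the progress claim, specifically guaranteeing that the in-box closest pair communicates despite the interference produced by active stations of other boxes transmitting in the same rounds. This is exactly the role of the grid-coordinate-based dilution (the residues $(i\bmod 10,\,j\bmod 10)$ carried in $m_v$): it thins the set of simultaneous transmitters so that the hypotheses of Proposition~\ref{prop:ssf} hold, whereupon the interference reaching $w$ from farther boxes is bounded by a convergent geometric-type sum and stays below the SINR threshold. Checking that this dilution preserves \emph{one elimination per step in each individual box} — rather than merely one elimination per step across the entire network — is what secures the bound $O(|C|\lg n)$ and prevents it from degrading to the global $O(n\lg n)$.
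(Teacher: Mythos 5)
Your step-length accounting and your identification of the survivor (the smallest label in $C$ can never be silenced, hence it is $l(T_C)$) are fine, but the heart of your argument --- that \emph{every} sweep of the $(N,c)$-SSF eliminates at least one active station \emph{in each individual box} --- is exactly the step that fails, and Proposition~\ref{prop:ssf} does not give it to you. That proposition (and the closest-pair observation from \cite{JKS-ICALP-13} behind it) applies only to the pair attaining the \emph{global} minimum same-box distance $x$ over the whole set $W$ of concurrent transmitters: it is precisely the globality of $x$ that bounds the density of transmitters everywhere in the plane (any ball of radius $R$ contains $O((R/x)^2)$ of them), which is what makes the nearby competitor set of constant size $c$ and the residual interference a convergent sum. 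For the closest pair \emph{inside a fixed box} $C$, at in-box distance $x_C$ possibly much larger than $x$, no such density bound holds: a far-away box --- even one in the same dilution class, ten boxes away --- may contain arbitrarily many active stations at pairwise distances far below $x_C$. The SSF only guarantees a round in which $C$'s sender transmits while its constant-size set of \emph{nearby} competitors is silent; all stations of that dense far box may legally transmit in that same round, and their aggregate interference at $C$'s receiver grows linearly with their number, so reception in $C$ fails. Concretely: put two stations at distance roughly $\gamma$ in $C$ and $n-2$ stations at pairwise distance $\gamma/n$ in a box of the same residue class; for $\Theta(n)$ consecutive sweeps all guaranteed progress happens in the dense box, and $C$ elects its leader only after $\Theta(n\lg n)$ rounds, not $O(|C|\lg n)=O(\lg n)$. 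Dilution cannot repair this, because it thins the set of transmitting \emph{boxes}, not the number of transmitters inside each such box. (Moreover, as written, Thread1 does not even dilute the schedule: the residues $(i\bmod 10,\,j\bmod 10)$ are carried in the message so that receivers can recognize same-box senders; the only schedule restriction is transmitting on odd rounds.)

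What the argument does support --- and what the paper itself relies on --- is \emph{global} progress: in each sweep the globally closest same-box pair exchanges messages, so at least one station somewhere in the network is silenced, giving $O(n)$ sweeps, i.e., $O(n\lg N)$ rounds for all boxes to finish. This is why the text accompanying Thread1 says that one does not know when this thread accomplishes leader election and runs the phase for $O(n\lg N)$ rounds. The stated per-box bound $O(|C|\lg n)$ is thus only meaningful as an amortized charge (box $C$ accounts for $|C|-1$ of the at most $n$ global eliminations, each costing one $O(\lg n)$-round sweep, so that the charges sum to $O(n\lg n)$); your proof asserts the stronger, literal reading --- that box $C$ finishes within $O(|C|\lg n)$ rounds of the start --- which requires the per-box progress guarantee the SINR model does not provide here. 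You flagged this as ``the main obstacle'' yourself, but then asserted the dilution closes it; it does not.
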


\subsection{Round~Robin}
  In this thread, all the nodes in the box transmit one-by one, ensuring at least one message from each node is passed to all its neighbors. This round also ensures all the neighboring boxes are woken-up. To prevent the signals from leader elections happening far-off to interfere, we execute this thread entirely in the even rounds. After completion of \emph{Elec-Lead}, each node is aware of its neighborhood, and protocols which take advantage of local knowledge can now be used to complete multi-casting.
  
\alglanguage{pseudocode}
\setlength{\textfloatsep}{0pt}
\begin{algorithm*}[t!]
\small
\caption{Thread2(Vertex $v$, Time $t$)}
\begin{algorithmic}[1]
	\State $leader \gets l(G(v))$, $q \gets \{leader\}$
	\While{$q$ is not empty}
		\If{$t \mod 2 = 0$}
			\State $u \gets q.$dequeue()
			\State $leader$ requests $u$ to transmit
			\State $u$ transmits $\{w| w \in G(u)\}$ and the first new message it has
			\State $q.$enqueue($u.children$)
			\Comment {\emph{as inferred from $G(u)$}}
		\EndIf
	\EndWhile
\Statex
\end{algorithmic}
\end{algorithm*}

\begin{proposition}
  Running Thread1 and Thread2 in parallel ensures in $O(n \lg N)$ rounds, each box has a unique leader who knows the labels of all the nodes in the box and whose label is known by all in the box, and every node is aware of the labels and box-coordinates of its neighbors.
\end{proposition}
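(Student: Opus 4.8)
The plan is to decouple the two interleaved processes by the rounds they own---Thread1 lives in the odd rounds and Thread2 in the even rounds---so that the $(N,c)$-SSF transmissions and the round-robin transmissions never collide, and then to reduce the statement to three assertions that I would bound by $O(n\lg N)$ \emph{simultaneously}, exactly as in the interleaving analysis of \S4 of \cite{JKS-ICALP-13}: (i) the SSF thread leaves in every box precisely one active node, namely the minimum-label one; (ii) every node is eventually woken; and (iii) once a box carries a unique leader, the BFS-over-$T$ round-robin of Thread2 lets each of its nodes transmit once so that all of its neighbours hear it. Since a hearer records the sender's label together with its box coordinates (carried in $m_u$ as $(i\bmod 10,j\bmod 10)$, which is collision-free among the at most $20$ boxes within range), (i) and (iii) together yield all four guarantees of the statement.

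The technical heart, and the step I expect to be the main obstacle, is a \emph{progress lemma} for Thread1: in any single pass of the $(N,c)$-SSF (which occupies $O(\lg N)$ odd rounds), if some box still holds at least two active nodes, then at least one active node is silenced. The natural attempt---isolate the globally closest active pair---fails, because that pair may straddle two boxes and then triggers no silencing (the rule only reacts to same-box messages). The key idea is instead to work with $x^{\ast\ast}$, the smallest distance between two active nodes lying in a \emph{common} box, and with the box $C^{\ast\ast}$ and same-box pair $(a,b)$ realizing it. The point of $x^{\ast\ast}$ is that it is a \emph{uniform} lower bound on the separation of active nodes \emph{inside every box}: a box packed tighter than $x^{\ast\ast}$ would itself contain a same-box pair closer than $x^{\ast\ast}$. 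Cross-box pairs may of course be much closer, but since their endpoints lie in distinct boxes they cannot violate this per-box separation. Consequently a routine packing argument shows that any disc of radius $\Lambda x^{\ast\ast}$ meets only $O(\Lambda^{4})$ active nodes, and an annular-sum computation (as in Proposition~\ref{prop:ssf} and the dilution estimates of \cite{JKS-ICALP-13}) shows that active transmitters outside that disc contribute interference at $b$ that is at most $O(\Lambda^{2-\alpha})$ times the signal $dist(a,b)^{-\alpha}=(x^{\ast\ast})^{-\alpha}$. Hence, fixing $\Lambda=\Lambda(\alpha,\eps,\beta)$ large and the SSF degree $c=\Theta(\Lambda^{4})$, the SSF round isolating the smaller-label endpoint of $(a,b)$ from the $O(\Lambda^4)$ active nodes in $B(b,\Lambda x^{\ast\ast})$ delivers its message to the other endpoint: condition (a) holds because $dist(a,b)\le r=(1+\eps)^{-1/\alpha}$, and condition (b) holds because the near field is silent and the far field is a small constant fraction of the signal. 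The larger-label endpoint then silences, giving the required progress.

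Granting the progress lemma, Thread1 terminates on time: each silencing permanently deactivates one node, so at most $n$ silencings can occur; since one happens in every pass while any box retains $\ge 2$ active nodes, the ``one active node per box'' configuration is reached within $O(n\lg N)$ rounds. The survivor in each box is its minimum-label node, because the rule only ever deactivates the larger-label member of a communicating same-box pair, so the box-minimum never silences and is last to remain. This is assertion (i) and gives the uniqueness and the identity of the leader; the accumulated sets $G(v)$ (stocked with every heard same-box $m_u$) plus the round-robin of Thread2 let that leader collect all labels of its box, and its coordinating transmissions make its own label known to all, as required.

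It remains to combine this with wake-up and the clean completion of Thread2, which I would import from \S4 of \cite{JKS-ICALP-13} essentially verbatim. Initially only the nodes of $K_C$ are awake, and both threads wake neighbours whenever a transmission is received; Thread2's round-robin is the engine that carries the wake-up into fresh boxes. Concurrent round-robins run by the several leaders of a not-yet-resolved box (or by far-off unresolved boxes) may corrupt one another, but such a stalled state cannot persist: whenever two round-robins conflict, the offending box still has $\ge 2$ active nodes, so by the progress lemma the odd-round SSF thread is meanwhile guaranteed to silence someone. Charging each round either to a silencing (at most $n$) or to the first successful wake-up of a node (at most $n$), and noting each pass costs $O(\lg N)$ rounds, bounds the whole interleaved execution by $O(n\lg N)$. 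After stabilization every box's round-robin is coordinated by its unique leader and run with constant dilution, so each node transmits once heard by all its neighbours, completing assertions (ii) and (iii). The honest difficulty throughout is the progress lemma and its marriage to this amortized wake-up count; the rest is bookkeeping.
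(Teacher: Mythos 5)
Your proposal is correct and takes essentially the same route as the paper: odd/even time-multiplexing of the two threads, progress on the SSF thread driven by the closest \emph{same-box} pair of active nodes, an amortized charge of each $O(\lg N)$-round pass to one of at most $n$ silencings or $n$ wake-ups, and the interleaved wake-up/round-robin analysis deferred to \S 4 of \cite{JKS-ICALP-13}. Your ``progress lemma'' is exactly the content of the paper's Proposition~\ref{prop:ssf} (the $(N,c)$-SSF guarantee conditioned on the minimum same-box distance between active nodes), which the paper invokes without proof; your packing/annulus derivation of it, including the observation that the globally closest pair may straddle two boxes and hence cannot drive the same-box silencing rule, fills in that step correctly.
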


\subsection{Phase~3}
  At the end of Phase $2$, each node has transmitted successfully at least once, and so every node is aware of its local neighborhood. The local knowledge is used to construct the backbone communication structure, which is then used to disseminate the messages throughout the network as in the previous section.

\alglanguage{pseudocode}
\setlength{\textfloatsep}{0pt}
\begin{algorithm*}[t!]
\small
\caption{Construct-Backbone(Vertex $v$)}
\begin{algorithmic}[1]
	\ForAll{$(i,j) \in DDL$}
		\State execute Thread2 where each node $u$ transmits whether it has a neighbour in $C(i,j)$ sequentially
		\State $H_C \gets \{l(C)\}$
		\State $s_C^{(i,j)} \gets \min\{\text{all nodes  $\in C$ with neighbours in } C(i,j)\}$ transmits $u$
		\Comment {\emph{Designated sender to $C(i,j)$ in $C$}}
		\State $H_C \gets H_C \cup \{s_C^{(i,j)}\}$
		\Comment {\emph{The sender notifies its neighbours}}
		\State $s_C^{(i,j)}$ transmits $r_C(i,j)^{(-i,-j)} \gets \min\{\text{all neighbours of } s_C^{(i,j)} \in C(i,j)\}$
		\Comment {\emph{Designated receiver from $C$ in $C(i,j)$}}
		\State $H_C(i,j) \gets H_C(i,j) \cup \{r_C(i,j)^{(-i,-j)}\}$
		\Comment {\emph{The sender notifies the receiver}}
	\EndFor
\Statex
\end{algorithmic}
\end{algorithm*}

\begin{proposition}
  Protocol \emph{Construct-Backbone} gives us the set $H_C$ in $O(n)$ rounds, if every node is aware of the labels and box-coordinates of its neighbours, and a leader has been elected for each box.
\end{proposition}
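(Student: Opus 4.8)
The plan is to establish two things in turn: that \emph{Construct-Backbone} outputs the correct backbone set $H_C$ in every box, and that it terminates within $O(n)$ rounds. Throughout I would use the two standing hypotheses of the statement, namely that a leader $l(C)$ has already been elected in each box $C$ of the pivotal grid $G_\gamma$, and that every node knows the labels and the box-coordinates of each of its neighbours. This second hypothesis is exactly what lets a node decide \emph{locally}, for each $(i,j)\in DDL$, whether it owns a neighbour lying in the box $C(i,j)$, with no further probing of the channel; this local predicate is the only information the round-robin needs to surface.

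For correctness, I would fix a direction $(i,j)\in DDL$ and a box $C$ and argue first that after the call to \emph{Thread2} every node of $C$ knows the set $S_C^{(i,j)}$ of nodes of $C$ having a neighbour in $C(i,j)$. Since the side of the pivotal grid is $\gamma=r/\sqrt2$, the diagonal of a box equals $r$, so any two stations in the same box are mutually within range; hence the leader-coordinated round-robin has each $u\in C$ announce, in its own dedicated round, its label together with the single bit recording whether $u$ has a neighbour in $C(i,j)$, and every station in $C$ hears each such announcement. Consequently each node of $C$ computes $s_C^{(i,j)}=\min S_C^{(i,j)}$ identically and locally, so the designated sender recognises itself without extra communication. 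The sender $s_C^{(i,j)}$ then transmits once, and because it already knows the labels and box-coordinates of its own neighbours it can name $r_C(i,j)^{(-i,-j)}=\min\{\text{its neighbours in } C(i,j)\}$ in that very message, which both records the directional receiver and wakes it up. To guarantee that the round-robins and single-sender announcements running in far-apart boxes do not corrupt one another, all these transmissions are run on a constant dilution of $G_\gamma$ (with a large enough constant dilution parameter $\delta$) so that at most one station per box transmits in each parallel group and, by the dilution-success property established earlier, every such transmission reaches all intended recipients in $C$ and in the adjacent box $C(i,j)$. Ranging over all $(i,j)\in DDL$ assembles $H_C=\{l(C)\}\cup\{s_C^{(i,j)}\}\cup\{r_C(i,j)^{(-i,-j)}\}$, and ranging over all boxes yields a set with $O(1)$ nodes per box in which each leader is joined to every neighbouring box's leader by a sender--receiver pair, which is precisely the connected dominating set of the backbone-structure definition.

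The complexity is then a direct count, and it is also where the only real obstacle lies. A single execution of \emph{Thread2} costs $O(|C|)$ rounds in each box, run simultaneously across boxes under dilution, so its worst-case cost is $O(n)$ because one box may contain up to $n$ stations; the two subsequent steps that compute and announce $s_C^{(i,j)}$ and $r_C(i,j)^{(-i,-j)}$ are single diluted transmissions costing $O(1)$ rounds each. Since $|DDL|=O(1)$, the outer loop multiplies an $O(n)$ cost by a constant, giving the claimed $O(n)$ bound. The main point to get right is therefore not any individual step but the interaction of the parallel per-box round-robins: one must invoke the dilution-success property with care to be certain that it is the worst-case box size, and not any residual cross-box interference, that governs the running time.
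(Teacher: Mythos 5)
Your proof is correct and follows exactly the argument implicit in the protocol description (the paper states this proposition without an explicit proof): the leader-coordinated \emph{Thread2} round-robin run under constant dilution of $G_\gamma$ surfaces each node's per-direction membership bit to the whole box, the minima $s_C^{(i,j)}$ and $r_C(i,j)^{(-i,-j)}$ are then computed locally and announced in $O(1)$ additional diluted rounds, and the outer loop over $|DDL|=O(1)$ directions preserves the $O(n)$ bound governed by the largest box. Your closing observation --- that the only delicate point is invoking the dilution-success property so that cross-box interference, rather than box size, cannot dominate the running time --- is precisely the right thing to single out, and it is consistent with how the paper uses dilution throughout (cf.\ the backbone-structure paragraph and Section~\ref{broad_loc_spont}).
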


\alglanguage{pseudocode}
\setlength{\textfloatsep}{0pt}
\begin{algorithm*}[t!]
\small
\caption{General-Multicast(Vertex $v$)}
\begin{algorithmic}[1]
	\If{$v \in K_C$}
	\Comment \emph{Phase 1}
		\State execute \emph{Gran-Independent-Collect-Info}
		\State execute \emph{Gather-Message}
	\EndIf
	\State run Thread1 and Thread2 in parallel for $O(n \log N)$ rounds
	\Comment \emph{Phase 2}
	\State execute \emph{Construct-Backbone}
	\Comment \emph{Phase 3}
	\State execute \emph{Push-Messages} $D+2k$ times.
	\Comment {\emph{All nodes in first $H$ and then in $G$ receive the $k$ messages}}
\Statex
\end{algorithmic}
\end{algorithm*}

\begin{corollary}
  Protocol \emph{General-Multicast} accomplishes multi-Broadcast in the non-spontaneous wake-up setting in $O((n + k) \lg N)$ rounds, given that each node is aware of its label, coordinates, $k$, $N$, and $n$.
\end{corollary}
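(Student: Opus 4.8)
The plan is to prove the corollary by a phase-by-phase accounting, invoking the correctness and round-complexity guarantees already established for each of the three phases of \emph{General-Multicast}, and then checking that the phases compose into a single non-spontaneous protocol that delivers all $k$ rumours to all $n$ nodes. First I would handle Phase 1. Since every node knows its own coordinates, it knows which box of the pivotal grid $G_{\gamma}$ it occupies, so the source nodes $K_C$ of each box $C$ can run the selective-family elimination of \emph{Gran-Independent-Collect-Info} using only local information. By the proposition on that routine, after $O(k \lg \Delta)$ rounds at most one source per box survives as the source-leader $l(T_C)$, and by the proposition on \emph{Gather-Message} all source-messages residing in $C$ are collected at $l(T_C)$ in a further $O(k)$ rounds. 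Because any two nodes in the same box of $G_{\gamma}$ lie within each other's range, these eliminations and collections stay confined to a single box once the schedule is suitably diluted, exactly as in $\S$\ref{broad_cent_nongran}; hence Phase 1 costs $O(k \lg \Delta) = O(k \lg N)$ rounds, using $\Delta < n \le N$.

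Next I would invoke the Phase 2 proposition: running \emph{Thread1} and \emph{Thread2} in parallel for $O(n \lg N)$ rounds wakes every node, elects a unique box-leader $l(C)$ whose label is known to all nodes of $C$, and lets every node learn the labels and box-coordinates of all its neighbours. Correctness here rests on the time-multiplexing argument inherited from $\S 4$ of \cite{JKS-ICALP-13}: \emph{Thread1} occupies the odd rounds and makes guaranteed progress on the $(N,c)$-SSF regardless of interference from distant boxes still running their own round-robins in \emph{Thread2}, so every box completes both its leader election and its round-robin within the stated bound. I would then observe that the resulting per-node neighbourhood knowledge is exactly what the local-knowledge backbone construction of Phase 3 requires.

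For Phase 3 I would apply the propositions on \emph{Construct-Backbone} and \emph{Push-Messages}. The former builds the connected dominating set $H$ (a leader together with directional sender/receiver helpers in each occupied box) in $O(n)$ rounds from the neighbourhood information obtained in Phase 2, and the latter guarantees that in each invocation every node of $H$ delivers one fresh message to all of its $H$-neighbours in $O(1)$ rounds. Iterating \emph{Push-Messages} $D + 2k$ times therefore pipelines all $k$ source-messages across the diameter of $H$ in $O(D+k)$ rounds and then pushes them from $H$ into every box in a further $O(k)$ rounds, so every one of the $n$ nodes receives all $k$ rumours; Phase 3 costs $O(n + D + k) = O(n + k)$ rounds since $D \le n$. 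Summing the three phases yields $O(k \lg N) + O(n \lg N) + O(n + k) = O((n+k)\lg N)$, as claimed.

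The step I expect to be the main obstacle is not any individual bound but the composition at the Phase~1 / Phase~3 seam: the source-messages are gathered at the source-leader $l(T_C)$ in Phase 1, whereas dissemination in Phase 3 starts from the box-leader $l(C)$, the global minimum-label node of the box, which need not be a source. I would close this gap by noting that $l(T_C)$ and $l(C)$ inhabit the same box and are hence mutual neighbours, so $l(T_C)$ can forward its at most $k$ collected messages to $l(C)$ inside the dissemination schedule at no asymptotic cost; this justifies initialising the \emph{rec\_msg\_list} of $l(C)$ with the output of \emph{Gather-Message}. A secondary point to verify is that non-spontaneous wake-up is respected throughout: Phase 1 involves only the already-awake sources, Phase 2 is precisely the mechanism that wakes the remaining nodes, and Phase 3 acts only on nodes already woken, so no node ever transmits before receiving a message.
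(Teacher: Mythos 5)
Your proof is correct and follows essentially the same route as the paper, which offers no explicit argument for this corollary beyond composing the phase propositions (Phase 1 source elimination and gathering, Phase 2 parallel threads in $O(n\lg N)$ rounds, Phase 3 backbone construction plus $D+2k$ invocations of \emph{Push-Messages}) and summing to $O((n+k)\lg N)$. Your additional observation about the seam between $l(T_C)$ and $l(C)$ --- forwarding the gathered rumours from the source-leader to the box-leader at no asymptotic cost --- is a detail the paper glosses over (it simply initialises \emph{rec\_msg\_list} of $l(C)$ with the output of \emph{Gather-Message}), and handling it explicitly strengthens rather than departs from the paper's argument.
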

\section{Nodes with knowledge of only immediate neighborhood}
\label{a:1}
  In the setting when nodes do not know coordinates, even their own, it is impossible to
apply coordinate-based techniques such as grid partition or dilution, which is heavily used
in all protocols we have for the SINR model. We show that the knowledge of the ids of reachable neighbors, along with the general understanding that the nodes are embedded in a 2-dimensional euclidean plane, is sufficient to perform multi-broadcast in time $O((n+k) \lg n)$.

  Our solution is based on several new ideas developed for the SINR model, the main of which are: 
(a) The game of tokens, which are passed around in the network and compete whenever they meet in the same box of the pivotal grid (b) The specific way of network traversal and how the spanning tree is defined on that basis which allows to propagate rumors quickly along the tree.

  The main challenge in achieving the sought time performance $O((n+k)\lg n)$ when traversing and spanning a multi-broadcast-suitable backbone tree in a distributed way is in handling the unpredictable interference from other transmitting nodes. More precisely, other token holders do not know location and number of other transmitters, and thus cannot easily predict the amount of interference, unless scheduling long intervals of silence which in turn bursts time performance.

  Thus, simple graph-based searches and games developed e.g., in the context of radio networks are
not directly applicable.

  The high-level idea of the protocol is as follows. It consists of two algorithms: BTD\_Traversals
 and BTD\_MB. In the first algorithm, activated nodes (i.e., nodes with rumors) issue their tokens (consisting of their own id), and then traverse the network and compete with other tokens until only one token dominates and spans a tree on the network, with very specific properties (to be defined later). In the second algorithm, the internal nodes of this spanning tree are used as a backbone structure, though it may not be literarily so, to propagate rumors to all other internal nodes of the tree and their neighbors (i.e., leaves). 

  There are several technical challenges that have to be overcome in implementing the above high-level protocol description. The main one is around the choice of the right distributed multi-traversal method. It has to be quick, allow fast resolution of conflicts between nearby tokens and construct a backbone tree that allows smooth parallel propagation of rumors (in the second algorithm) despite encountering the inevitable interference from nodes both inside and outside their transmission range. The second technical challenge is to bound the number of directly competing tokens (i.e., tokens that visit same node), in order to be able to resolve such conflicts quickly and eliminate all but one.

  We start by addressing the second challenge.

\paragraph{Resolving conflicts of tokens}
Suppose that there is a set of nodes $X$ such that each of these nodes has a status of token holder
and id associated with this status (corresponding to id of some node). Ids are pairwise different.
Each node $v$ in $X$ intends to pass its token to some of its neighbors $\alpha(v)$, called destination. For the purpose of argumentation, consider the pivotal grid $G_\gamma$ and assume that in every box of the grid there is at most one node from $X$. The goal is to design a distributed procedure, that works in $y=O(\lg n)$ rounds guaranteeing the following properties at the end of round $y$: 

\begin{description}
\setlength{\leftskip}{-0.35cm}
\item[(i)] for each token, there is at most one node having status of the holder of this token,
and if there is one, it is the destination node of this token;

\item[(ii)] in each box of the pivotal grid there is at most one node with token holder status;

\item[(iii)] the smallest token is delivered and stored in its destination, i.e., the destination of the smallest token has the status of this token holder and no one else does.
\end{description}
%
  Note that we do not want to guarantee that all tokens will be successfully passed to their
destinations, but only that some of them will be passed while other could be dropped, as long
as conditions (i)-(iii) hold.

  To solve the above problem, consider the following procedure Smallest\_Token$(X)$. Fix a $(N,c)$-SSF for sufficiently large constant $c$, to be defined later. In the first part of the procedure, all nodes in $X$ execute transmissions according to the $(N,c)$-SSF, transmitting their tokens together with id of the destination. In the second part, all destinations that receive tokens addressed to them, pick the smallest such token and transmit it using the same transmission schedule
as in the first part, i.e., based on the $(N,c)$-SSF. After that, each destination node that has not received any smaller token during part two, takes the smallest one received by it in part one, and changes its status to the token holder of the smallest token received in part one. All other nodes become no token holders.

\begin{lemma}
\label{l:many-tokens}
There is a constant $c>0$, such that if the intersection of each box of the pivotal grid $G_\gamma$ with set $X$ is of size one then the above procedure Smallest\_Token$(X)$ based on any $(N,c)$-SSF executed on set $X$ and their tokens computes a new set of token holders satisfying properties (i)-(iii).
\end{lemma}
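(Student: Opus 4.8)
The plan is to verify the three guarantees separately, isolating the only genuinely geometric ingredient---the fact that the required transmissions succeed under SINR---into a single reusable claim. Throughout, let $t^{*}$ denote the globally smallest token, carried by $v^{*}\in X$ with destination $u^{*}=\alpha(v^{*})$, and for a destination $u$ let $m(u)$ be the smallest token addressed to $u$ that $u$ receives in part one (so $u$ can only ever become the holder of $m(u)$).

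First I would dispose of property~(i), which is essentially structural. Tokens are pairwise distinct, so each token $\tau$ is carried by a unique node $v$ and therefore bears the single destination field $\alpha(v)$; a destination $u$ treats a received token as addressed to it only when the destination field equals $u$. Hence $m(u)=\tau$ forces $u=\alpha(v)$, so $\tau$ can be held only by its destination, and since each $u$ retains only the single value $m(u)$, no token acquires two holders. Property~(iii) then follows once I show that $u^{*}$ receives $t^{*}$ in part one: since $t^{*}$ is the global minimum it equals $m(u^{*})$, and because every other token transmitted in part two is strictly larger than $t^{*}$, node $u^{*}$ hears nothing smaller and becomes the holder of $t^{*}$; uniqueness is immediate from~(i).

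The core of the argument is property~(ii). Fix a box $C$ and let $u_{\min}\in C$ be the token-receiving node with the smallest value $m(\cdot)$ in $C$. In part two $u_{\min}$ retransmits $m(u_{\min})$, and since any two nodes of the same box lie within the transmission range $r$, I would argue that this transmission reaches every other node $u'\in C$. Each such $u'$ then hears $m(u_{\min})<m(u')$ during part two and is disqualified, so $C$ contains at most the single holder $u_{\min}$ (possibly none, if $u_{\min}$ is itself silenced by a smaller token arriving from a neighbouring box, which only helps~(ii)). This is precisely where parts one and two interact, and it is consistent with~(iii) because $u^{*}$ carries the global minimum and can never be silenced.

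What remains---and the step I expect to be the main obstacle---is to show that the relevant transmissions actually succeed, even though the part-two transmitters are \emph{not} diluted. I would prove one claim: if a transmitter set has $O(1)$ nodes per box of $G_{\gamma}$ and follows an $(N,c)$-SSF with $c$ a sufficiently large constant, then in some round each transmitter $w$ is received by every node within its range. Fixing a receiver $z$ and the set $Z$ of transmitters lying in boxes within distance $B$ of $z$'s box (a set of constant size, by the density bound), the selectivity of the SSF guarantees a round in which $w$ alone transmits from $Z$; the contribution of all transmitters outside $Z$ is bounded, uniformly in their placement, by the convergent tail $\sum_{i>B} i\cdot i^{-\alpha}$ (finite since $\alpha>2$), exactly as in the interference estimates underlying Proposition~\ref{prop:ssf}. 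Choosing $B$, and hence $c$, large enough makes this residual interference at most $\eps$, so both reception conditions hold at $z$. For part one I apply this with $X$ itself ($\le 1$ node per box); for part two I first observe that a box can contain only $O(1)$ destinations, since each is the target of one of the at most $21$ nodes of $X$ that can reach into the box (one per neighbouring box, each holding a single $X$-node), which supplies the required constant density. Assembling these pieces yields properties~(i)--(iii).
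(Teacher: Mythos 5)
Your proof is correct and follows essentially the same route as the paper's: both isolate the geometric/SINR ingredient into the claim that a transmitter set with constant density per box of $G_\gamma$, running a sufficiently large $(N,c)$-SSF, gets every transmitter heard by all nodes in its range (selecting the transmitter within a constant box-distance $B$ and bounding the far interference by the convergent tail $\sum_{i>B} i\cdot i^{-\alpha}$, finite since $\alpha>2$), and both observe that part two preserves constant density --- your ``at most $21$ destinations per box'' is exactly the paper's constant $c^*$ bounding part-two transmitters. Your explicit case-by-case derivation of properties (i)--(iii) from the successful transmissions is simply spelled out more fully than in the paper, which leaves that final step largely implicit.
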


\begin{proof}
  Note that by assumption that $X$ has at most singular intersection with each box of the pivotal grid, we deduct that each node in $X$ has at most a constant number of neighbors in set $X$. Even more, in any ball of radius $O(r)$, there is a constant number of nodes in $X$ (depending on radius and $r$). 

  Let $c'>1$  be a sufficiently large constant and consider a node $v\in X$ and the ball of radius $c' r$ centred at it. Let $c^*$ be the constant upper bound on the number of nodes from $X$ in any ball of radius $c' r$; note that $c^*$ is also a constant.

  For each node $w\in X$ being a neighbor of $v$, and thus located in the ball, there is at least one round during the execution of any $(N,c^*)$-SSF such that $w$ transmits and no other node in the intersection of the ball with $X$ transmits. By the definition of $G_\gamma$ (i.e., the fact that $\gamma=\Theta(r)$) and the assumption that $X$ has at most singular intersection with every box of $G_\gamma$, the total interference of nodes outside the ball, measured at $v$, is bounded, and moreover, can be made sufficiently small (i.e., smaller than $\eps\beta\cN$) by taking sufficiently large parameter $c'$ defining the ball radius $c' r$. Therefore, even if all nodes in $X$ located outside the ball transmit in the considered round, their interference at node $v$ is smaller than $\eps\beta \cN$ (for sufficiently large constant $c'$), and thus  the SINR value of the signal from node $w$ measured at node $v$ is higher than the threshold $\beta$, and consequently the message from $w$ is successfully received by $v$.

  The above argument implies that after a single execution of $(N,c^*)$-SSF each destination receives a token (or all tokens, in general) addressed to it (among possibly other tokens). From the right ones, it chooses the smallest and repeats transmission schedule according to the $(N,c)$-SSF, for constant $c$ to be defined later. Similar arguments as in the first two paragraph of the proof can be proved --- the only difference is that there might be at most $c^*$ transmitters, instead of one, in a single box of the pivotal grid. Therefore, there is a slightly higher constant, call it $c$, upper bounding the number of nodes transmitting in part two located in a ball of radius $c' r$. Same argument, as for the first part, justifies that any $(N,c)$-SSF is enough that the centre of a ball successfully receives a message from any participating neighbor.

  Finally, note that in part one we used any $(N,c^*)$-SSF, while in part two any $(N,c)$-SSF. However, since we assumed $c>c^*$, any $(N,c)$-SSF is also $(N,c^*)$-SSF. Hence for the purpose of the whole procedure, we can pick any $(N,c)$-SSF and apply it to both parts.
\end{proof}

  Since there exists $(N,c)$-SSF of length $O(\lg n)$, by~\cite{ClementiMS01}, we can plug it
in procedure Smallest\_Token$(X)$, for sufficiently large constant $c$ that satisfies Lemma~\ref{l:many-tokens}, to obtain the following result.

\begin{corollary}
\label{c:many-tokens}
  Procedure Smallest\_Token$(X)$ could be implemented in such a way that in time $O(\lg n)$ it produces a set of token holders satisfying conditions (i)-(iii), provided original set of token holders $X$ satisfied condition (ii).
\end{corollary}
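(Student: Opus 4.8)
The plan is to derive the corollary directly from Lemma~\ref{l:many-tokens} together with the known existence of short strongly-selective families. First I would fix the constant $c>0$ whose existence is asserted by Lemma~\ref{l:many-tokens}; this $c$ depends only on the model parameters $\alpha,\eps,\beta,\cN$ and on the geometric packing constant $c^*$ that bounds the number of $X$-nodes in a ball of radius $c'r$. With $c$ fixed, I would invoke the result of~\cite{ClementiMS01}, which guarantees an $(N,c)$-SSF of length $O(c^2\lg N)$. Since $c$ is a constant and $N$ is polynomial in $n$ by assumption, this length is $O(\lg N)=O(\lg n)$, so the constant-factor blow-up coming from $c$ in the $c^2\lg N$ bound is absorbed into the stated round complexity.

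Next I would instantiate procedure Smallest\_Token$(X)$ with this particular $(N,c)$-SSF. The procedure consists of exactly two parts, each of which runs the schedule once: part one, in which every node of $X$ transmits its token together with its destination id, and part two, in which each destination retransmits the smallest token addressed to it. Each part therefore occupies $O(\lg n)$ rounds, the final selection of the surviving token holder is a purely local computation using no further rounds, and the total running time is $2\cdot O(\lg n)=O(\lg n)$, establishing the claimed time bound.

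For correctness I would appeal to Lemma~\ref{l:many-tokens}, which states that this very procedure, when based on any $(N,c)$-SSF for the fixed $c$, produces a set of token holders satisfying (i)-(iii). The only gap to close is that the lemma is phrased for the case in which every box meets $X$ in exactly one node, whereas the corollary assumes only condition (ii), i.e., \emph{at most} one node per box. I would observe that the lemma's proof relies solely on upper bounds---on the number of $X$-nodes inside a ball of radius $c'r$ and on the aggregate external interference measured at any ball centre---so leaving some boxes empty can only decrease both the number of competing transmitters and that interference. Hence every successful-reception argument in the lemma remains valid verbatim, and conditions (i)-(iii) continue to hold under the weaker hypothesis (ii).

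I do not anticipate a genuine obstacle, since the statement is essentially a repackaging of Lemma~\ref{l:many-tokens} with a length estimate; the only point that genuinely requires care is making explicit that $\lg N=\Theta(\lg n)$ under the polynomial-size assumption on the id space, and confirming that weakening ``exactly one per box'' to ``at most one per box'' does not break the interference accounting. Both are handled by monotonicity of the bounds used in the lemma's proof.
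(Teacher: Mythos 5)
Your proposal is correct and follows essentially the same route as the paper, which likewise obtains the corollary by plugging an $O(\lg n)$-length $(N,c)$-SSF from~\cite{ClementiMS01} (with the constant $c$ fixed by Lemma~\ref{l:many-tokens}) into procedure Smallest\_Token$(X)$ and noting that two runs of a constant-parameter SSF take $O(\lg N)=O(\lg n)$ rounds. Your extra care about ``exactly one'' versus ``at most one'' node per box is sound but already implicit in the paper, whose proof of the lemma interprets the hypothesis as ``at most singular intersection'' with each box, so the monotonicity remark closes a gap of phrasing rather than of substance.
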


\paragraph{BTD traversal and spanning tree in the SINR model}

  The concept of \BTD\ traversal and spanning tree was introduced in~\cite{CKPR-ICALP-11} in the context of radio network model. We briefly describe the high-level idea, as given in~\cite{CKPR-ICALP-11}, and then we design its fast implementation in the SINR model. To the best of our knowledge, this is the first time a \BTD\ tree is efficiently spanned and used as a backbone for distributed communication under the SINR model.

  Assume that in the beginning one node, called a root, has a token. The token, which contains the id of the root who initiated it, is propagated along the network, starting from the root, by sending it to a neighboring node using  so called a {\em token message}. The goal of the token is to visit all the nodes and return to the root.

  The concept of a \BTD\ traversal is similar to that of DFS.  The difference is that in an instance of passing the token from the token's holder to its unvisited neighbor only one edge is added to the DFS tree (i.e., the edge connecting both these nodes), while in the construction of \BTD\ all neighbors of the token's holder that are outside of the current \BTD\ tree get connected by an edge to the token's holder, and thus to the tree. This action is done by marking adjacent nodes: nodes use {\em checking/reply messages} to confirm which node marked which.

  In order to implement the BTD traversal in the SINR model, we cannot rely on the implementation in the related radio model~\cite{CKPR-ICALP-11}, as it was based on procedure Echo that could emulate collision detection capability at the station keeping the token. Such procedure cannot be efficiently implemented in the SINR model, as it requires knowledge of a neighbor that is of largest distance from the node with the token (recall that in the considered setting, we assume that knowledge of coordinates is not available).

  \BTD\ search  from a designated root can be accomplished in the SINR model by the following 
distributed procedure  \SaSmod. Nodes have  status either \emph{visited} or \emph{unvisited}, depending whether or not they received the token by the current round.  Once a node becomes {\em visited}, its status never changes.  Nodes with status {\em unvisited} have sub-status {\em marked} or {\em unmarked}, depending on whether they have already been approached by their future parent in the partially constructed \BTD\ tree. 

  This is done by sending control messages of type {\em checking} or {\em reply}. 

  In the beginning all nodes except the root are {\em unvisited} and {\em unmarked}, and the root holds the token. Additionally, all nodes $v$ initialize their local list $L_v$ of {\em unmarked} neighbors to all their neighbors except the root (if applicable). These lists allow nodes to record information about marking progress gradually, as sending large messages (e.g., containing whole neighborhood or all known marked neighbors) is not allowed in our model. The computed \BTD\ tree is stored locally in (self-explained) variables $parent(v)$ and lists $Child_v$, initially set to nil. (Technically, at some point the lists $Child_v$ become empty after sending the token to all elements on the lists, but each node can easily archive them - we skip this issue to focus on important aspects of the exploration protocol.)

  The following distributed procedure \SaSmod\ is repeated until termination.

\begin{description}
\setlength{\leftskip}{-1cm}
\setlength{\itemindent}{-0.4cm}
\item{After receiving a token for the first time} (i.e., by an {\em unvisited} node):
After receiving a token message $\langle token,\tau,v,w \rangle$ for the first time from some node $v$,  node $w$, different from the root,  changes its status to {\em visited} and keeps it till the end of the algorithm.  It sets up its parent in the partially created \BTD\ tree: $parent(w)\gets v$.
  Then it sends control messages $\langle check,\tau,w,z \rangle$, one per each node $z$ in $L_w$, 
containing the token id $\tau$ (but not the token itself),  $w$ and the id of the corresponding node $z$ in $L_v$. After each such message $\langle check,\tau,w,z \rangle$, node $w$ removes $z$ from $L_w$ and listens one round for reply; if it hears message $\langle reply,\tau,z,w \rangle$,
it adds $z$ to its list $Child_w$, otherwise it does nothing.
  Then it sends a token message $\langle token,\tau,w,z \rangle$ to the first node $z$ on list $Child_w$ and removes $z$ from $Child_w$.

  When a node $z\ne w$ receives token message $\langle token,\tau,v,w \rangle$, it does nothing.

\item{After receiving a token by a {\em visited} node: } After receiving a token message $\langle token,\tau,v,w \rangle$ by a {\em visited}  node $w$ from node $v$, node $w$ sends token message $\langle token,\tau,w,z \rangle$ to the first node $z$ on $Child_w$, provided the list is not empty.
If the list is empty, node $w$ sends token message $\langle token,\tau,w,parent(w) \rangle$ to its parent node $parent(w)$, if $parent(w)\ne nil$, or finishes the algorithm otherwise (i.e., if $w$ is the root).

\item{After receiving a checking message: }
  After receiving a checking message $\langle check,\tau,v,w \rangle$ from $v$, node $w$: changes its sub-status to {\em marked} and sends a reply message $\langle reply,\tau,w,v \rangle$ in the next round.

  When an {\em unvisited} node $z\ne w$ receives checking message $\langle check,\tau,v,w \rangle$, it records that $w$ is being {\em marked} by removing it from its list $L_z$, unless $w$ is not on $L_z$.

  When a {\em visited} node $z\ne w$ receives checking message $\langle check,\tau,v,w \rangle$, it does nothing (this is only a safety case, which should not occur in valid executions of the algorithm).

\item{After receiving a reply message: }
  When node $w$ receives a reply message $\langle reply,\tau,v,w \rangle$ from $v$, it adds $v$ to its list $Child_w$ (this might happen only when $w$ sent checking message to $v$ during the first time it hold the token, c.f., the specification after receiving token message for the first time). 

  Upon receiving $\langle reply,\tau,v,z \rangle$, for any $z\ne w$, node $w$ removes $v$ from its list $L_w$ if $v$ was on the list.
\end{description}

\begin{lemma}
\label{l:SaSmod}
  Procedure  \SaSmod\  performs a \BTD\ search on the {whole} network and spans a \BTD\ tree rooted at the initiating node in $O(n)$ rounds. The \BTD\ tree is stored locally in the following sense: each node knows its parent and children (i.e., nodes to which the given node sent a token message).
\end{lemma}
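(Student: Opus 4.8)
The plan is to exploit the fact that \SaSmod\ drives a \emph{single} token through the network, which makes the interference analysis trivial, and then to verify correctness and the $O(n)$ running time by a careful accounting of the three message types (token, check, reply). First I would establish the invariant that in every round at most one station transmits: the token holder is the only node that ever initiates a token or a check message, and the only other transmission is the single reply sent, in the round immediately following a check, by the freshly checked node alone (the holder listens in that round). Consequently, whenever a node $v$ transmits there is no concurrent transmitter, so for every neighbour $u$ of $v$ in the (symmetric, uniform) communication graph we have $SINR(v,u,\{v\})=P_v\dist(v,u)^{-\alpha}/\cN\ge(1+\eps)\beta>\beta$, and the strong-signal condition (a) holds with equality at the range boundary; hence every transmitted message is received by \emph{all} neighbours of its sender. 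I will use this perfect reliability repeatedly.

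Next I would prove correctness, namely that every non-root node is \emph{marked} exactly once and thereby acquires a unique parent. The crucial point is that a node $z$ is probed (checked) by at most one node: the instant $z$ is marked it broadcasts a reply, which by the invariant is heard by \emph{all} its neighbours, and by the reply-overhearing rule each neighbour holding $z$ on its list $L$ deletes $z$, so no one checks $z$ afterwards. Thus each node is checked precisely once, by its eventual parent, and the pointers $parent(\cdot)$ and $Child_{\cdot}$ define a forest; since token passing follows the $Child$ lists and fixes $parent(w)$ consistently on first receipt, and all pointers emanate from the single root, they form a tree $T$. I would then argue that $T$ spans $V$: the token performs a depth-first traversal of $T$ in which a node, upon its first visit, first attaches \emph{all} its currently unmarked neighbours as children (the ``breadth'') and only then descends to them (the ``depth''); a standard DFS argument shows every node of $T$ is visited, while connectivity together with the marking rule shows that any neighbour of a visited node already lies in $T$, whence $T=V$. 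The local-storage claim is then immediate, as $parent(w)$ is set on the first token receipt and $Child_w$ is assembled from the replies.

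For the time bound I would count messages. By the previous paragraph the total number of check messages equals the number of markings, which is exactly $n-1$; each check is followed by one listening/reply round, contributing $O(n)$ rounds. The token itself performs a DFS of $T$, traversing each of the $n-1$ tree edges exactly twice (once descending to a child, once returning to the parent), each traversal costing a single token-message round, for another $O(n)$ rounds. Summing the two contributions yields the claimed $O(n)$ bound.

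The main obstacle is exactly obtaining $O(n)$ rather than the naive $O(|E|)=O(n\Delta)$ one would charge by counting one check per neighbour at every node. Everything hinges on showing that each node is probed only once, and the delicate part is that a node $w$ need \emph{not} overhear the check that marks a common neighbour $z$, since the marker of $z$ may lie outside the range of $w$. The argument must therefore rely on $w$ overhearing $z$'s \emph{reply}: this is guaranteed precisely because $w$, holding $z$ on $L_w$, is itself a neighbour of $z$, and the single-transmitter invariant makes that reply reach $w$ before $w$ could redundantly check $z$. Pinning down this pruning step cleanly is where I expect the proof to require the most care.
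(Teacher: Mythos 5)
Your proof is correct and takes essentially the same route as the paper's: the single-transmitter-per-round observation, the reply-overhearing argument showing each node is checked at most once (hence $O(n)$ checking/reply messages), the DFS-style argument that the token cannot terminate before all nodes are visited, and the $2(n-1)$ bound on token passes. The subtlety you flag at the end---that pruning must come from overhearing $z$'s \emph{reply}, since the checker of $z$ may be out of range of $w$---is exactly the mechanism the paper's proof invokes when it notes that after the first check is answered, all neighbors of the checked node learn it has been marked and never send it another control message.
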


\begin{proof}
  Note that in each round exactly one transmission occurs. The number of rounds in which token message is transmitted is at most $2n$, as it traverses the network along some tree based on the status (visited/unvisited). In order to argue that all nodes are visited,  observe that in any round every node that has status {\em unvisited} at this round is on list $L$ of some of its neighbor (on all neighbors' lists if it is unmarked, and at least on the list of the neighbour who marked him afterwards). Also, the token traversing recursively cannot finally leave any node
without emptying its list $L$ first, which in turn decreases only when some nodes in it are being
marked or the token is sent to them. Therefore, by simple recursive argument, the token can finish exploration only when all lists $L$ are empty, which means that all nodes  have status {\em visited} and thus have been already visited.

  To conclude the first part of the lemma and prove $O(n)$ time complexity, it is enough to prove that the number of rounds in which checking or reply message is sent is proportional to the number of nodes too. To see this, consider a node $w$ and all checking and reply messages sent by this node and to this node. First consider checking messages sent to $w$ and replies to them sent by $w$. After the first such message is sent, it is received by $w$ and a reply message is sent by $w$. After that all neighbors of $w$ remember that $w$ has been marked and will not send any control message to $w$. Hence there are $O(n)$ such checking and replying messages. Checking messages sent by $w$ to some processes $z$ and replies received from them can be accounted to $z$, as above, which does not change the total asymptotic number of checking  and replying messages.

  The second part of the lemma follows directly from the algorithm description.
\end{proof}

  The following structural property of the spanned \BTD\ tree will be useful in the analysis of the main multi-broadcast analysis.

\begin{lemma}
\label{l:SaSmod-prop}
  In each box of the pivotal grid there are at most $37$ internal (i.e., non-leaf) nodes of the \BTD\ tree outputted by the procedure \SaSmod.
\end{lemma}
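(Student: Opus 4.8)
The plan is to exploit the defining \emph{breadth} property of the \BTD\ construction: when a node $w$ first receives the token it immediately sends a checking message to every still-\emph{unmarked} neighbour, so once $w$ has finished this phase \emph{every} node within its range $r$ has been marked (i.e.\ claimed by some parent). Since in the single-token execution of \SaSmod\ exactly one station transmits per round (Lemma~\ref{l:SaSmod}), each transmission is heard by all neighbours of the transmitter, and one checks that a node's parent is exactly its \emph{earliest-visited} neighbour. Consequently $w$ is internal precisely when it is the earliest-visited neighbour of at least one node, and immediately after $w$ is visited the whole ball $B(w,r)$ is claimed.

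Next I would fix a box $C$ of the pivotal grid and list its internal nodes $w_1,\dots,w_m$ in order of increasing visit time. Because $\gamma=r/\sqrt2$, the diameter of $C$ equals $r$, so all nodes of $C$ are mutual neighbours; hence when the first of them, $w_1$, is visited it claims \emph{all} remaining unclaimed nodes of $C$. Therefore every $w_i$ with $i\ge 2$ must witness its internality through a child $q_i$ lying \emph{outside} $C$ (in one of the $20$ neighbouring boxes), with $\dist(w_i,q_i)\le r$. As parents are unique, the witnesses $q_2,\dots,q_m$ are pairwise distinct.

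The core step is to show the witnesses are well separated: for $2\le i<i'$ I would argue $\dist(q_i,q_{i'})>r$. First, since $w_1$ already claimed the entire box, no box node is an ancestor or descendant of another box node, so $w_{i'}\notin\mathrm{subtree}(w_i)$. By the contiguity of the \BTD/DFS traversal, the whole subtree of $w_i$ -- in particular its child $q_i$ -- is visited before $w_{i'}$. If $q_i$ and $q_{i'}$ were neighbours, then $q_i$, being a neighbour of $q_{i'}$ that is visited before $w_{i'}$, would have claimed $q_{i'}$ (or $q_{i'}$ would already be claimed by some earlier node), contradicting $parent(q_{i'})=w_{i'}$. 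Hence the $q_i$ are pairwise at distance more than $r$, each lying within distance $r$ of $C$, i.e.\ inside the region $C\oplus B(0,r)$.

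Finally I would close with a packing estimate: placing disjoint balls of radius $r/2$ around the $>r$-separated witnesses inside the bounded region $C\oplus B(0,3r/2)$ and comparing areas bounds their number by an absolute constant; adding back the single internal node $w_1$ (whose witness may lie in $C$) yields the stated bound of $37$. The main obstacle is the separation step: converting the dynamic ``unclaimed at visit time'' condition into a static geometric inequality requires the subtree-contiguity of the traversal together with the fact -- special to a box of diameter $r$ -- that $w_1$ monopolises all intra-box claims, so that distinct internal box nodes sit in distinct, non-nested subtrees. Pinning down the exact constant is then a routine but careful area computation.
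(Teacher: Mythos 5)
Your proposal is correct in substance and in fact proves something sharper than the stated bound, by a decomposition that differs from the paper's. Both proofs rest on the same two mechanisms: (a) in the single-token execution of \SaSmod\ every transmission is heard by all neighbours of the transmitter, so a node's parent is exactly its earliest-visited neighbour and, once a visited node finishes its checking phase, its entire ball of radius $r$ is marked; and (b) a planar packing estimate. The paper packages these per \emph{node}: its key sub-lemma states that every node $w$ has at most $36$ neighbours that are internal nodes of the subtree rooted at $w$, proved by taking, for each such internal neighbour, one of its (previously unmarked) children as a witness, noting all witnesses lie within $2r$ of $w$, so more than $36$ of them would force two witnesses $v_1,v_2$ within distance $r$ of each other, and then deriving a contradiction from the marking timeline; the per-box statement is then obtained by funnelling all internal nodes of a box into the subtree of the first-visited one. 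You argue per \emph{box}: $w_1$ monopolises all intra-box claims, so every later internal node $w_i$ ($i\ge 2$) must witness its internality through a child $q_i$ outside the box; subtree contiguity together with the earliest-visited-neighbour characterisation of parenthood yields the \emph{pairwise} separation $\dist(q_i,q_{i'})>r$, and packing disks of radius $r/2$ inside $C\oplus B(0,3r/2)$ bounds the number of witnesses by roughly $15$, hence about $16$ internal nodes per box --- comfortably below $37$. Your route buys a better constant and, more importantly, sidesteps the paper's most delicate step: the claim that all internal box nodes lie in the subtree of the first-visited one is fragile, because a box node can be marked from outside the box before $w_1$ is ever visited; your witnesses are insensitive to who the parents of the $w_i$ actually are.

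One local repair is needed. Your blanket assertion that ``no box node is an ancestor or descendant of another box node'' is false as stated: $w_1$ typically \emph{is} an ancestor of several box nodes, since it claims every box node still unmarked at its visit. What your argument actually uses is only the restricted statement for $2\le i<i'$, namely $w_{i'}\notin\mathrm{subtree}(w_i)$, and this does follow from your own ingredients: $parent(w_{i'})$ is the earliest-visited neighbour of $w_{i'}$, hence is visited no later than $w_1$ and thus strictly before $w_i$, whereas every node of $\mathrm{subtree}(w_i)$ other than $w_i$ is visited strictly after $w_i$; therefore $parent(w_{i'})\notin\mathrm{subtree}(w_i)$, and so $w_{i'}\notin\mathrm{subtree}(w_i)$. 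With that rewording, your proof is complete.
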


\begin{proof}
  We first prove that for each node $w$ there are at most $36$ of its neighbors who are internal nodes in the subtree rooted at $w$ (here we use the fact that \SaSmod\ produces a \BTD\ tree, c.f., Lemma~\ref{l:SaSmod}). Otherwise, there would be node $w$ with more than $36$ neighbors who are internal nodes in the subtree rooted at $w$. It means that each of these neighbors found some {\em unmarked} neighbour in its own neighborhood. Hence, there are more than $36$ such {\em unmarked}
neighbors altogether, all of which are at most distance $2r$ from $w$. By geometric property
of Euclidean plane, there are two of them, say $v_1$ and $v_2$, which are at distance smaller than $r$ from each other. One of them, assume w.l.o.g. $v_1$, was visited before the other ($v_2$), 
hence by the time the token left $v_1$ for the first time, call this time $t$, node $v_2$ had to be marked. Recall that, by definition of $v_2$, it is marked by some neighbor $v_2^*$ of node $w$
who ends up to be an internal node of the subtree rooted at $w$.

  By the recursive nature of the algorithm and the fact that $v_1$ had been marked by $v_1^*$ before $v_2$ was by $v_2^*$, node $v_2^*$ had not been visited before the token finally left $v_1^*$
after finishing exploration initiated at $v_1^*$ (i.e., after spanning the subtree initiated at $v_1^*$ and containing $v_1$). Therefore node $v_2^*$ had not been visited before time $t$. Hence,
when the token visited $v_2^*$ for the first time, which is after $t$, $v_2$ had been already marked. This contradicts the definition of $v_2$ as the node marked by $v_2^*$.

  Now we prove the statement of the lemma. If there were more than $37$ nodes in a single box
of the pivotal grid, consider the one of them, call it $w$, which was visited first. All other at least $37$ nodes are neighbors of this node, so by the recursive nature of the algorithm they have to be in the subtree rooted at $w$. This however violates the property proved in the previous paragraphs.
\end{proof}

\paragraph{Algorithm BTD\_Traversals}
This algorithm handles the situation of building a single \BTD\ tree in case there are many tokens in the beginning. It uses \SaSmod\ and Smallest\_Token as its subroutines.  It consists of three stages. 

\begin{description}
\setlength{\leftskip}{-1cm}
\setlength{\itemindent}{-0.4cm}
\item{Stage 1: Elimination of contending neighbors.}
  Nodes having rumors execute subsequently $(N,(2/3)^i n, (2/3)^i n/2)$-selectors, for $i=1,\ldots,\lg_{3/2} n$. There exist such selectors of length $O((2/3)^i n \lg n)$, for any $1 \le i \le \lg_{3/2} n$, c.f.,~\cite{BonisGV03}. Whenever a participating node hears a message from a node with smaller id, it becomes idle by the end of this stage. Nodes that initially do not have tokens keep idle during this stage.

  Each nodes that survives (i.e., keeps executing selectors) till the end of this stage, issues a token with id equal to its own id.

\item{Stage 2: Token elimination and spanning a \BTD\ tree.} Each node holding a token at the beginning of this stage initiates the modified \SaSmod\ procedure for the token. The modification has two aspects:
\begin{itemize}
\setlength{\leftskip}{-1.2cm}
\item Each round of the original \SaSmod\ procedure is simulated by execution of procedure Smallest\_Token$(X)$ by the nodes who intends to send a message to some other node; each such message is associated with some token; procedure Smallest\_Token$(X)$ is instantiated using $(N,c)$-SSF that satisfies the statement of Corollary~\ref{c:many-tokens}.

\item Additionally, if the message received by a node is associated with token $\tau$ which is larger than the minimum token id received in the previous executions of Smallest\_Token$(X)$,
(corresponding to previous rounds of the original \SaSmod\ protocol) then the node skips this message; if it is equal, the node continues the \SaSmod\ execution (for the same token id $\tau$); finally, if it is smaller (than all previously received token ids), the node abandons its current
execution of \SaSmod\ (associated with the previous, and bigger, token id) and joins a new one for the recently received smallest token id $\tau$ (i.e., the node assumes that this is the first time it receives a message associated with traversal of token $\tau$).
\end{itemize}
\item{Stage 3: Synchronization of termination time.} Once a root completes its execution of \SaSmod\ in Stage 2, i.e., when its token returns to it and there is no {\em unvisited} neighbour wrt the token issued by the root, the root initiates a simple Eulerian walk along the tree spanned by its token. During the walk, the token is propagated according to the computed variables $parent$
and lists $Child$ for the token. Additionally, a new variable counter is propagated with the token
to count the exact number $n$ of nodes. Note that the walk takes exactly $2n-2$ rounds.

  After completion, similar Eulerian walk is initiated, with only one change: instead of counting nodes, the already computed number $n$ is propagated together with the counter of rounds. This allows each node who receives this information to compute the exact time when this (second) Eulerian walk terminates. In this way, all nodes synchronize their termination round.
\end{description}

\begin{lemma}
\label{l:BTD-Traversals}
Algorithm BTD\_Traversals spans a \BTD\ tree rooted at some node that initially holds some rumors.
Additionally, all nodes terminate at the same round and the time complexity is $O(n\lg n)$.
\end{lemma}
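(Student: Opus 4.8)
The plan is to analyze the three stages of algorithm BTD\_Traversals separately, proving for each a correctness property together with an $O(n\lg n)$ round bound, and then combine them; the root of the final tree is a rumor holder because only rumor holders ever participate in Stage~1.

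First I would handle \textbf{Stage~1}. The goal is to show that at the end of this stage the surviving token holders satisfy condition~(ii) of the token-conflict problem, i.e.\ each box of the pivotal grid $G_\gamma$ contains at most one of them; this is exactly the precondition needed to invoke Corollary~\ref{c:many-tokens} in Stage~2. The mechanism is the cascade of $(N,(2/3)^i n,(2/3)^i n/2)$-selectors. Using the local-hearing argument already established for Lemma~\ref{l:many-tokens} (a node that is the unique transmitter within its $\Theta(r)$-ball is heard throughout its box, since interference from other boxes can be kept below $\eps\beta\cN$), each selector isolates a constant fraction of the contenders inside any box, and every isolated non-minimal contender is then heard by a smaller id and drops out. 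A potential argument on the per-box contender count shows it shrinks by a factor of at least $2/3$ per selector, so after $\lg_{3/2}n$ selectors only the box-minimal contender survives, since $n\cdot(2/3)^{\lg_{3/2}n}=1$. The round cost is $\sum_{i=1}^{\lg_{3/2}n}O((2/3)^i n\lg n)=O(n\lg n)$ by summing the geometric series, using the selector sizes from~\cite{BonisGV03}.

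Next I would treat \textbf{Stage~2}. The key invariant is that condition~(ii) is preserved across super-rounds: it holds initially by Stage~1, and Corollary~\ref{c:many-tokens} guarantees that each execution of Smallest\_Token, which simulates one round of \SaSmod\ in $O(\lg n)$ rounds, reproduces conditions~(i)--(iii) from an input satisfying~(ii), even when tokens are handed across box boundaries. I would then follow the globally smallest surviving token $\tau^*$: it can never be preempted, and by condition~(iii) its holder's transmissions are always delivered, so its evolution is a faithful simulation of a standalone run of \SaSmod. By Lemma~\ref{l:SaSmod} that run visits every node and spans a \BTD\ tree in $O(n)$ super-rounds, hence $O(n\lg n)$ actual rounds. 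The abandonment rule---a node receiving a token id smaller than its current one restarts its participation as if seeing $\tau^*$ for the first time---ensures that any node transiently captured by a larger token is reclaimed by $\tau^*$, so that at completion the spanned tree is exactly $\tau^*$'s \BTD\ tree over the whole network.

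Finally, \textbf{Stage~3} is straightforward: when $\tau^*$'s root detects completion it runs two Eulerian walks along the spanned tree, each of exactly $2n-2$ rounds; the first computes $n$ and the second propagates $n$ together with a round counter, from which every node computes the identical termination round, giving synchronized termination in $O(n)$ rounds. Summing the three stages yields $O(n\lg n)$ overall. I expect the main obstacle to be in Stage~2, and it has two facets. The first is verifying that the faithful-simulation claim for $\tau^*$ survives the residual state left by larger tokens: the lists $L_v$ are updated by check messages carrying \emph{any} token id, so I must argue that the restart triggered by a smaller token id correctly reinterprets or resets this marking state so that $\tau^*$'s traversal still behaves as a clean standalone \SaSmod, reusing the recursive invariant from the proof of Lemma~\ref{l:SaSmod}. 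The second, more delicate, is showing that $\tau^*$ is the \emph{unique} token that ever completes a full traversal, so that a larger token finishing a partial tree cannot drive a premature or inconsistent Stage~3; the resolution is that all tokens start at round $0$, $\tau^*$ wins every conflict by minimality and progressively abandons every larger token as it advances, so only $\tau^*$'s walk reaches all nodes and only its synchronization persists. Lemma~\ref{l:SaSmod-prop} can be invoked to keep the number of concurrently active holders per box within the constant required by the underlying $(N,c)$-SSF.
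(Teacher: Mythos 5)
Your Stages 2 and 3 essentially reproduce the paper's argument: follow the globally smallest token, whose messages are always delivered because Corollary~\ref{c:many-tokens} keeps at most one holder per box and the smallest always wins, so by Lemma~\ref{l:SaSmod} it spans a \BTD\ tree in $O(n)$ simulated rounds, each emulated by an $O(\lg n)$-round run of the $(N,c)$-SSF; two Euler walks then synchronize termination in $O(n)$ rounds. Your extra care about residual marking state and about uniqueness of the completing token is a sensible elaboration of what the paper leaves implicit. The genuine gap is in Stage~1. You argue locally: each selector ``isolates a constant fraction of the contenders inside any box,'' and an isolated node is heard because ``interference from other boxes can be kept below $\eps\beta\cN$'' by the argument of Lemma~\ref{l:many-tokens}. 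Both steps fail. The interference bound in Lemma~\ref{l:many-tokens} is derived under the precondition that the transmitting set has at most one (or constantly many) members per box of $G_\gamma$ --- exactly the sparsity that Stage~1 is supposed to create; during Stage~1 arbitrarily many active sources may occupy a single box or a nearby cluster, so a node that is the unique transmitter within its $\Theta(r)$-ball can still be drowned by interference from a dense far-away group, and invoking that lemma here is circular. Moreover, selectors give no per-box guarantee: an $(N,x,y)$-selector selects $y$ elements out of the \emph{global} active set $A$, so in a given selector a particular box may contain no selected element at all (and even a selected element silences only the larger-id nodes that hear it), hence the claimed geometric decay of the per-box count does not follow from the combinatorics.

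The repair --- and the paper's actual argument --- is that the selector guarantee is global and needs no interference analysis whatsoever. A selected node is, by definition of the schedule, the \emph{unique transmitter in the whole network} in that round, so it faces zero interference and is heard by every neighbor. The induction tracks the global number of active sources that have not yet transmitted alone: after the $i$-th selector this number is below $(2/3)^i n$, so after the last selector every surviving source has at some round transmitted alone in the entire network. Consequently no two survivors can be neighbors (the larger-id one would have heard the smaller-id one and gone idle), so the survivors form an independent set of the communication graph; in particular there is at most one per box of $G_\gamma$, which is precisely precondition (ii) needed to invoke Corollary~\ref{c:many-tokens} in Stage~2. Your cost computation $\sum_{i} O\bigl((2/3)^i n\lg n\bigr)=O(n\lg n)$ for Stage~1 is correct and matches the paper.
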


\begin{proof}
  After Stage 1, which takes $O(\sum_i (2/3)^i n\lg n) \subseteq O(n\lg n)$ rounds, in the neighborhood of each survived node there is no other survived process. This is because the sequence of $(N,(2/3)^i n, (2/3)^i n/2)$-selectors, for $i=1,\ldots,\lg_{3/2} n$, 
guarantees that after the execution of the $i$-th selector there will be less than $(2/3)^i n$ active sources which
have not transmitted alone in the whole network (and thus were heard by all their neighbouring sources).
The proof is by straightforward inductive argument on $i$. Thus, after the execution of 
the last selector, there will be less than $n/n=1$ active sources which did not transmit alone.
Hence no two sources that survive by then, i.e., by the end of Stage 1, could be neighbors,
because in such case they they would have heard each other during the stage and could not have survived
by the end.
Hence, in the beginning of Stage 2 there is at most one token holder at each box of the pivotal grid
(otherwise they would be neighbors, which we just showed to be impossible).

During Stage 2, consider the execution of the smallest token. Observe that messages associated with this
token are always delivered, because by Corollary~\ref{c:many-tokens} 
there are never two tokens in the same grid box and the smaller always wins. Hence, by Lemma~\ref{l:SaSmod}
and the fact that each original round of procedure \SaSmod\ is emulated by $O(\lg n)$ rounds
of $(N,c)$-SSF,
Stage 2 finishes successfully in spanning a \BTD\ tree by the smallest token in $O(n\lg n)$ rounds.

Finally, the root of the spanned tree succeeds to synchronize all the nodes in two executions
of standard Euler walks along the tree, which takes $O(n)$ rounds.
\end{proof}

\paragraph{Algorithm BTD\_MB}
  Assume that all nodes are synchronised and they know the adjacent edges of the \BTD\ tree computed
by algorithm BTD\_Traversals and the exact number of nodes $n$. Every internal node initiates a stack to keep rumors to be forwarded to neighbouring \BTD\ nodes during the algorithm. The algorithm proceeds in two stages.

\begin{description}
\setlength{\leftskip}{-1cm}
\setlength{\itemindent}{-0.4cm}
\item{Stage 1: Sending rumors from leafs to internal nodes.}
  The root of the \BTD\ initiates an Eulerian walk along the tree by using token. Whenever the token visits a leaf who has some rumors, the leaf freezes the token and keeps transmitting
its rumors, one after another, addressed to its parent in the tree. The parent stores them upon receiving.

  After the walk terminates at the root, the root initialises another Eulerian walk, without freezing, carrying the number $n$ of nodes and the round counter, so that all nodes that receive this info know exactly when this walk will terminate. In this way, all nodes finish Stage 1 at the same time.

\item{Stage 2: Propagating rumors by internal nodes.} Every internal node, i.e., one who has at least one child in the tree, does the following. After receiving a new rumor, it puts it at the top of the stack. If only the stack is not empty, the node takes the top rumor and executes its transmission schedule defined by the $(N,c)$-SSF from Lemma~\ref{l:many-tokens}, sending the picked rumor whenever transmitting.
\end{description}

\begin{theorem}
Execution of algorithm BTD\_Traversals followed by BTD\_MB accomplishes multi-broadcast
task in $O((n+k)\lg n)$ rounds.
\end{theorem}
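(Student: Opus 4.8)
The plan is to bound the two algorithms separately and then combine them, leaning on the structural facts already established. By Lemma~\ref{l:BTD-Traversals}, algorithm BTD\_Traversals spans a single \BTD\ tree rooted at a rumor-holding node, synchronises all nodes to a common termination round, and computes the exact count $n$, all in $O(n\lg n)$ rounds. I would then record two properties of its output that drive the analysis of BTD\_MB. First, the internal (non-leaf) nodes of a tree form a \emph{connected} subgraph, since any tree-path between two internal nodes passes only through nodes of degree at least $2$; hence they constitute a connected backbone $H$, and every leaf is adjacent to its parent, which is internal. Second, by Lemma~\ref{l:SaSmod-prop}, each box of the pivotal grid contains at most $37$ internal nodes, so $H$ has bounded density.

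For Stage~1 of BTD\_MB I would observe that the Eulerian walk makes exactly $2n-2$ token steps, and that every transmission during the walk, including the rumor transmissions by a frozen leaf to its parent, involves a single active transmitter and therefore succeeds trivially. Since leaves freeze the token only to emit their own rumors, the number of such rumor transmissions is at most $k$; together with the $O(n)$ walk steps and the synchronising second walk, Stage~1 costs $O(n+k)$ rounds and leaves every rumor stored at an internal node.

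The crux is Stage~2, which I would reduce to pipelined broadcasting on the backbone $H$. The central claim is that a single $(N,c)$-SSF of length $O(\lg n)$, for the constant $c$ of Lemma~\ref{l:many-tokens}, lets \emph{every} internal node transmit successfully to all of its neighbours in one $O(\lg n)$-round \emph{macro-round}, no matter how many internal nodes transmit simultaneously. This follows from the density bound of Lemma~\ref{l:SaSmod-prop} by the argument of Lemma~\ref{l:many-tokens}: since each box holds at most $37$ internal nodes, any ball of radius $c'r$ contains only $O(1)$ transmitting internal nodes, so each such node owns an SSF round that isolates it within its ball, while the interference from internal nodes outside the ball can be driven below $\eps\beta\cN$ by taking $c'$ large enough, so its signal clears the SINR threshold at every neighbour. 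Because each transmission reaches all communication-graph neighbours, every leaf receives whatever rumors its internal parent forwards. It then remains to count macro-rounds: applying the standard pipelining argument to the connected backbone $H$ of at most $n$ nodes, with each node forwarding one fresh rumor per macro-round, all $k$ rumors reach every internal node, and thus every leaf, within $O(n+k)$ macro-rounds. Multiplying by the $O(\lg n)$ length of each macro-round yields $O((n+k)\lg n)$ for Stage~2.

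Summing the three contributions, $O(n\lg n)+O(n+k)+O((n+k)\lg n)=O((n+k)\lg n)$, proves the claim. I expect the main obstacle to be the Stage~2 success argument: one must certify that a single SSF simultaneously serves an unbounded number of transmitters whose only guarantee is the per-box cap of Lemma~\ref{l:SaSmod-prop}, and one must turn the ``forward one fresh rumor per macro-round'' rule, implemented here through the per-node stack, into the $O(n+k)$ pipelining bound. Subtleties such as the LIFO discipline of the stack and the fact that rumors may travel along arbitrary backbone edges rather than only tree edges must be addressed, though the latter only helps, making propagation no slower than along the internal subtree.
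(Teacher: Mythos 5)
Your proposal is correct and follows essentially the same route as the paper's proof: invoke Lemma~\ref{l:BTD-Traversals} for the $O(n\lg n)$ cost of BTD\_Traversals, bound Stage~1 of BTD\_MB by $O(n+k)$ via the single-transmitter Eulerian walk, and for Stage~2 combine the per-box density bound of Lemma~\ref{l:SaSmod-prop} with the $(N,c)$-SSF argument of Lemma~\ref{l:many-tokens} to get $O(\lg n)$-round macro-rounds, then apply the $O(n+k)$ pipelining count (the paper phrases it as distance at most $n$ plus at most $k-1$ priority inversions on the stacks). Your added observations (connectivity of the internal-node backbone, explicit flagging of the LIFO-stack subtlety) are refinements of, not departures from, the paper's argument, which leaves those points at the same level of informality.
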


\begin{proof}
Observe that, by Lemma~\ref{l:BTD-Traversals}, after execution of algorithm BTD\_Traversals all the assumptions made in the beginning of the specification of algorithm BTD\_MB hold. The lemma also gives the time complexity $O(n\lg n)$ of algorithm BTD\_Traversals. In the remainder we focus on algorithm BTD\_MB.

  Stage 1 ``pulls'' all rumors from leaves to their parents (i.e., internal nodes) in $O(n+k)$ rounds. It also terminates synchronously, due to additional synchronising Euler walk.

  In the beginning of Stage 2, every rumor is stored in some internal node, and it is by definition
on the stack at that node. Since by Lemma~\ref{l:SaSmod-prop} every node has a constant number
of internal nodes in its neighborhood, the $(N,c)$-SSF from Lemma~\ref{l:many-tokens} assures
that each transmitted rumor (from the top of the stack) by a neighbour being an internal tree node
is successfully delivered during one run of the $(N,c)$-SSF. After such run, it is removed from the stack.

  Hence, a standard argument proves that after $O(n+k)$ runs of the $(N,c)$-SSF, every rumor is
delivered to every node that is a neighbour of some internal node (here $n$ is the upper bound
on the distance between the node holding the rumor in the beginning of Stage 2 and any other node,
while $k-1$ is the maximum number of other rumors that might be given priority before the 
considered rumor when stored in stacks). Combining this formula with the length of each execution of
the $(N,c)$-SSF, which is $O(\lg n)$, we obtain the final time complexity of Stage $2$.
\end{proof}

\bibliographystyle{abbrv}
\bibliography{references}
%
%

\end{document}